\newtheorem{remark}{Remark}
\newtheorem{proposition}{Proposition}
\newtheorem{lemma}{Lemma}
\newcommand{\Q}{\bm Q}
\newcommand{\A}{\bm A}
\newcommand{\w}{\bm w}
\newcommand{\V}{\bm V}
\newcommand{\I}{\bm{I}}
\newcommand{\W}{\bm W}
\newcommand{\Ss}{\bm S}
\newcommand{\q}{\bm q}
\newcommand{\vvv}{\bm v}
\newcommand{\ttheta}{\mathbf \Theta}
\begin{document}
\title{IRS-Aided WPCNs: A New Optimization Framework for Dynamic IRS Beamforming}

\author{\IEEEauthorblockN{Qingqing Wu,  Xiaobo Zhou,  Wen Chen, Jun Li, and Xiuyin Zhang
\thanks{   Q. Wu is with the State Key Laboratory of IoT for Smart City, University of Macau, Macau  (email: qingqingwu@um.edu.mo). X. Zhou is with the School of Physics and Electronic Engineering, Fuyang Normal University,  China (email: zxb@fynu.edu.cn).  W. Chen is with the Department of Electronic Engineering, Shanghai Jiao Tong University,  China (e-mail: wenchen@sjtu.edu.cn).  J. Li is with the School of Electronic and Optical Engineering, Nanjing University of Science and Technology, China (e-mail: jun.li@njust.edu.cn). X. Zhang is with the School of Electronic and Information Engineering, South China University of Technology, China (e-mail: zhangxiuyin@scut.edu.cn).
   }   } }


\maketitle
\vspace{-1cm}
\begin{abstract}
In this paper, we propose a {\it new dynamic IRS beamforming} framework to boost the sum throughput of  an intelligent reflecting surface (IRS) aided wireless powered communication network (WPCN).
Specifically, the IRS phase-shift vectors across time and resource allocation are jointly optimized to enhance the efficiencies of both downlink  wireless power transfer (DL WPT) and uplink wireless information transmission (UL WIT) between a hybrid access point (HAP) and multiple wirelessly powered devices.  To this end, we first study three special cases of the  dynamic IRS beamforming, namely {\it user-adaptive IRS beamforming}, {\it UL-adaptive IRS beamforming}, and {\it static IRS beamforming}, by characterizing their optimal performance relationships and proposing corresponding algorithms. Interestingly, it is rigorously proved that the latter two cases achieve the same throughput, thus helping halve the number of IRS phase shifts to be optimized and signalling overhead practically required for UL-adaptive IRS beamforming. Then, we propose a general optimization framework for dynamic IRS beamforming, which is applicable for any given number of IRS phase-shift vectors available.
Despite of the non-convexity of the general problem with highly coupled optimization variables, we propose two algorithms to solve it and particularly, the low-complexity algorithm exploits the intrinsic structure of the optimal solution as well as the solutions to the cases with user-adaptive and static IRS beamforming.   Simulation results validate our theoretical findings, illustrate the practical significance of IRS with dynamic beamforming for spectral and energy efficient WPCNs, and demonstrate the effectiveness of our proposed designs over various benchmark schemes.
\end{abstract}

\vspace{-3mm}
\begin{IEEEkeywords}
\vspace{-2mm}
Intelligent reflecting surface, wireless powered IoT, dynamic beamforming, resource allocation.
\end{IEEEkeywords}


\section{introduction}
Future wireless networks are expected to support massive connections for application scenarios such as  Internet-of-Things (IoT) where the devices  can be electronic tablets, sensors, wearables, and so on.  
This thus requires a scalable and efficient solution for  providing them perpetual power supply, particularly to achieve the envisioned sustainable and green IoT. To this end, far-field radio-frequency (RF) transmission enabled wireless power transfer (WPT) has recently gained an upsurge of interest \cite{krikidis2014simultaneous,wu2016overview}, due to its greater convenience as well as larger charging distance than conventional battery replacement and inductive/magnetic resonance coupling based wireless charging techniques.   However, energy receivers generally require much higher receive signal power than information receivers, due to their different receiver sensitivities and design objectives in practice.
 As such, the low efficiency of  WPT  for energy receivers over long transmission distances fundamentally limits the performance of practical WPT systems. {Although exploiting
  the large array/beamforming gain brought by deploying massive antennas at the WPT transmitter can in principle boost the WPT efficiency significantly, it faces various challenges in practical implementation, e.g., exceedingly high energy consumption and hardware cost  \cite{zhang2016fundamental,wu2016overview}.}

Recently, intelligent reflecting surface (IRS) has been proposed as a low-cost technology to achieve spectral and energy efficient wireless networks \cite{JR:wu2019IRSmaga,JR:wu2018IRS}. Specifically, by smartly coordinating the reflection phase shifts of a large number of passive elements at IRS, wireless propagation channels between transceivers can be reconfigured in real-time  to achieve different design objectives, such as signal focusing and interference suppression.  In particular, the fundamental performance limit of IRS was firstly derived in  \cite{JR:wu2018IRS} which proves that IRS is able to provide an asymptotic \emph{squared power gain} in terms of the user receive power via passive beamforming. Such a promising power scaling law of IRS  has then motivated an intensive research interest in investigating joint active and passive beamforming  for various IRS-aided systems (see \cite{rajatheva2020white,xu2020resource,zou2020wireless,mu2019exploiting,JR:wu2019IRSmaga,JR:wu2018IRS,di2020practical,cui2019secure,JR:wu2019discreteIRS,guan2019intelligent,xu2019resource,fu2019intelligent,
huangachievable} and the references therein). 
While the above works focused on applying IRS to assist wireless information transmission (WIT), it is also practically appealing to make use of the high passive beamforming gain of IRS for improving the WPT efficiency \cite{JR:wu2019IRSmaga,wu2021intelligentWPT}. Specifically, leveraging intelligent reflections over large aperture IRSs can effectively compensate the severe distance-based signal  attenuation  and helps establish local energy harvesting/charging zones in their vicinity, thus leading to a largely extended service coverage of WPT.  This is of crucial importance for widening the practical use-cases of WPT in multifarious application scenarios and unlocking its full potential in achieving the promising battery-free IoT networks in the future.

To reap the above benefits, two research lines have been identified in \cite{wu2021intelligentWPT}, namely IRS-aided simultaneous wireless information and power transfer (SWIPT) and IRS-aided wireless powered communication networks (WPCNs). Specifically,  the first line of research aims at exploiting the high passive beamforming gain to enlarge the rate-energy tradeoff in IRS-aided SWIPT systems where information and energy receivers are served concurrently using the same RF signals sent from an access point (AP) \cite{wu2019weighted,wu2019jointSWIPT,tang2020joint,liu2020energy,zhao2020intelligent,zargari2020energy,ata2021swipt,pan2019intelligent,Xu2020,li2020joint,kudathanthirige2020max,Ata2021maxmin,gong2020beamforming}.  To this end, subject to the minimum  signal-to-interference-plus-noise ratio (SINR) requirements of information receivers, joint information and energy beamforming design was studied  in \cite{wu2019weighted} to maximize the weighted sum-power  of energy receivers. It was later extended  in \cite{wu2019jointSWIPT} and \cite{pan2019intelligent} by considering the transmit power minimization and sum-rate maximization problems, respectively. In particular, the results in  \cite{wu2019jointSWIPT} showed that the use of IRS not only lowers the transmit power required at the AP but also effectively reduces the number of energy beams as compared to the case without IRS, which thus greatly simplifies the transmitter design. In contrast, IRS-aided WPCNs focus on improving the communication performance by exploiting IRS to assist WPT and WIT across different time slots. It is mainly based on a  ``harvest and then transmit''  protocol where self-sustainable devices first harvest energy in the downlink (DL) and then transmit information in the uplink (UL) \cite{lyubin2021IRS,YuanZheng2020irs,zheng2020joint,QQWU2021IRSNOMA}. The sum throughput of an IRS-aided WPCN was maximized in \cite{lyubin2021IRS}  for UL WIT employing time-division multiple access (TDMA). Then,  the common throughput maximization problem of an IRS-aided WPCN with user cooperation in the UL was studied in \cite{YuanZheng2020irs}. However, this study was limited to a WPCN with two users and also increases the coordination complexity. In \cite{zheng2020joint}, the extension to the  multiuser case was presented, where space-division multiple access (SDMA) was employed for UL WIT by jointly optimizing the IRS phase shifts and transmit powers.  

Despite of the above works, some fundamental issues still remain unsolved in IRS-aided WPCNs.
First,  does exploiting more IRS phase-shift patterns/vectors over time for DL WPT and UL WIT really bring throughput improvement of a WPCN?  Since DL WPT and  UL WIT have different design objectives and also occur in different time periods, it is usually believed that exploiting \emph{dynamic IRS beamforming}, i.e.,  adopting different IRS phase-shift vectors in the above two phases, is able to improve the system performance. As such,  all the above works on IRS-aided WPCNs naturally assumed that different phase-shift vectors are adopted for DL WPT and UL WIT, respectively, and then solved the corresponding problems numerically with suboptimal solutions, which, however, does not provide any concrete insights into this issue. Therefore, it still remains an open problem when dynamic IRS beamforming is actually beneficial for maximizing the throughput of WPCNs.  
Second, how to jointly optimize the dynamic  IRS beamforming and system resource allocation for an arbitrary number of phase-shift vectors?  This question is motivated by the fact that even if dynamic IRS beamforming is able to improve the performance, it also incurs more signalling overhead.  Specifically,  due to the limited computing capability of the low-cost IRS,  the hybrid access point (HAP) is typically in charge of the algorithmic computations and then sends the optimized phase shifts to the IRS controller for reconfiguring reflections. As such, adopting more IRS phase-shift vectors not only increases the computational complexity due to more optimization variables, but also leads to more signalling overhead as well as the associated delay for feeding them back  to the IRS controller. As such,   it may not be preferable to excessively rely on dynamic IRS beamforming considering the performance-cost tradeoff, especially when the number of IRS's elements is practically large.

Motivated by the above considerations, we study an IRS-assisted WPCN where an IRS is deployed to assist the TDMA-based DL WPT and UL WIT between an HAP and multiple devices, as shown in Fig. \ref{system:model}. Our objective is to maximize the weighted sum throughput of all devices by jointly optimizing the resource allocation and IRS phase shifts (i.e., passive beamforming).   It is worth noting that unlike traditional WPCNs where the channels of all devices are generally random and remain static throughout each channel coherence block \cite{ju14_throughput},  we are able to proactively generate favourable time-varying channels by properly designing the IRS phase-shift vectors over different time slots, which thus enhances the multiuser diversity over time and also allows more flexible resource allocation.  The main contributions of the paper are summarized as follows.
\begin{itemize}
  \item We first study three special cases of dynamic IRS beamforming in the WPCN, namely user-adaptive IRS beamforming, UL-adaptive IRS beamforming, and static/constant IRS beamforming. For the user-adaptive scheme,  the IRS phase shifts can be optimized not only for DL WPT but also for each of devices across their WIT durations, whereas for the UL-adaptive scheme,  all the devices share the same set of IRS phase shifts during their UL WIT. For static IRS beamforming, the IRS phase shifts remain constant for both UL WPT and UL WIT throughout the whole transmission duration. To provide more flexibility to balance the  performance-cost tradeoff, we then propose a new and general optimization framework for dynamic IRS beamforming where the IRS is allowed to adjust its phase shifts by an arbitrary number of times during DL and UL.
\item
For the three special cases of dynamic IRS beamforming, we first unveil their inherent relationships by showing that the user-adaptive scheme generally outperforms the UL-adaptive scheme, while the latter is equivalent to the static IRS beamforming scheme.  This thus halves the number of IRS phase shifts to be optimized and signalling overhead practically required for the UL-adaptive scheme.  Then, we propose two efficient algorithms based on semidefinite relaxation and successive convex approximation (SCA) techniques to solve the formulated problems where all the variables are optimized simultaneously.
\item For the  optimization problem considering general dynamic IRS beamforming, we first propose an SCA based algorithm by applying proper changes of variables with exponential functions to solve it.  To reduce the computational complexity, we further propose an efficient algorithm by deeply exploiting the special structure of the optimal solution as well as the algorithms for the above three special cases. In particular, we prove that there exists a binary association between the UL phase-shift vectors and devices, which implies that each device performs UL WIT using only one IRS phase-shift vector and at most $K+1$ IRS phase-shift vectors suffice to maximize the system throughput of WPCNs where $K$ denotes the number of devices.
\item Simulation results verify our theoretical findings and demonstrate the significant performance gains achieved by the proposed algorithms compared to benchmark schemes. It is also found that  exploiting IRS with dynamic beamforming for WPCNs not only improves the system throughput but also reduces the system energy consumption. Furthermore, it was shown that the user unfairness issue induced by the so-called ``doubly-near-far'' problem in traditional WPCNs can be efficiently mitigated by exploiting the proper deployment of IRS.
\end{itemize}

The rest of this paper is organized as follows. Section II introduces the system model and problem formulations for a WPCN with three special cases of dynamic IRS beamforming.  Sections III presents proposed algorithms for solving problems in Section II. In Section IV, we propose a general optimization problem for the IRS-aided WPCN and devise two algorithms for solving it.  Section V presents numerical results to evaluate the performance of the proposed algorithms. Finally, we conclude the paper in Section VI.

\emph{Notations:} Scalars are denoted by italic letters, vectors and matrices are denoted by bold-face lower-case and upper-case letters, respectively. $\mathbb{C}^{x\times y}$ denotes the space of $x\times y$ complex-valued matrices. For a complex-valued vector $\bm{x}$, $\|\bm{x}\|$ denotes its Euclidean norm and $\text{diag}(\bm{x})$ denotes a diagonal matrix with each diagonal entry being the  corresponding entry in $\bm{x}$. The distribution of a circularly symmetric complex Gaussian (CSCG) random vector with mean vector  $\bm{x}$ and covariance matrix ${\bm \Sigma}$ is denoted by  $\mathcal{CN}(\bm{x},{\bm \Sigma})$; and $\sim$ stands for ``distributed as''. For a square matrix $\Ss$, ${\rm{tr}}(\Ss)$ and $\Ss^{-1}$ denote its trace and inverse, respectively, while $\Ss\succeq \bm{0}$ means that $\Ss$ is positive semi-definite, where $\bm{0}$ is a zero matrix of proper size.  For any general matrix $\A$, $\A^H$,  ${\rm{rank}}(\A)$, and $\A(i,j)$ denote its conjugate transpose, rank, and $(i,j)$th entry, respectively. $\I_M$  denotes an identity matrix  of size $M \times M$. $ \jmath $ denotes the imaginary unit, i.e., $\jmath ^2 = -1 $. $\mathbb{E}(\cdot)$ denotes the statistical expectation. $ \mathrm{Re}\{\cdot\}$ denotes the real part of a complex number. 

\begin{figure}[!t]
\centering
\includegraphics[width=3.4in]{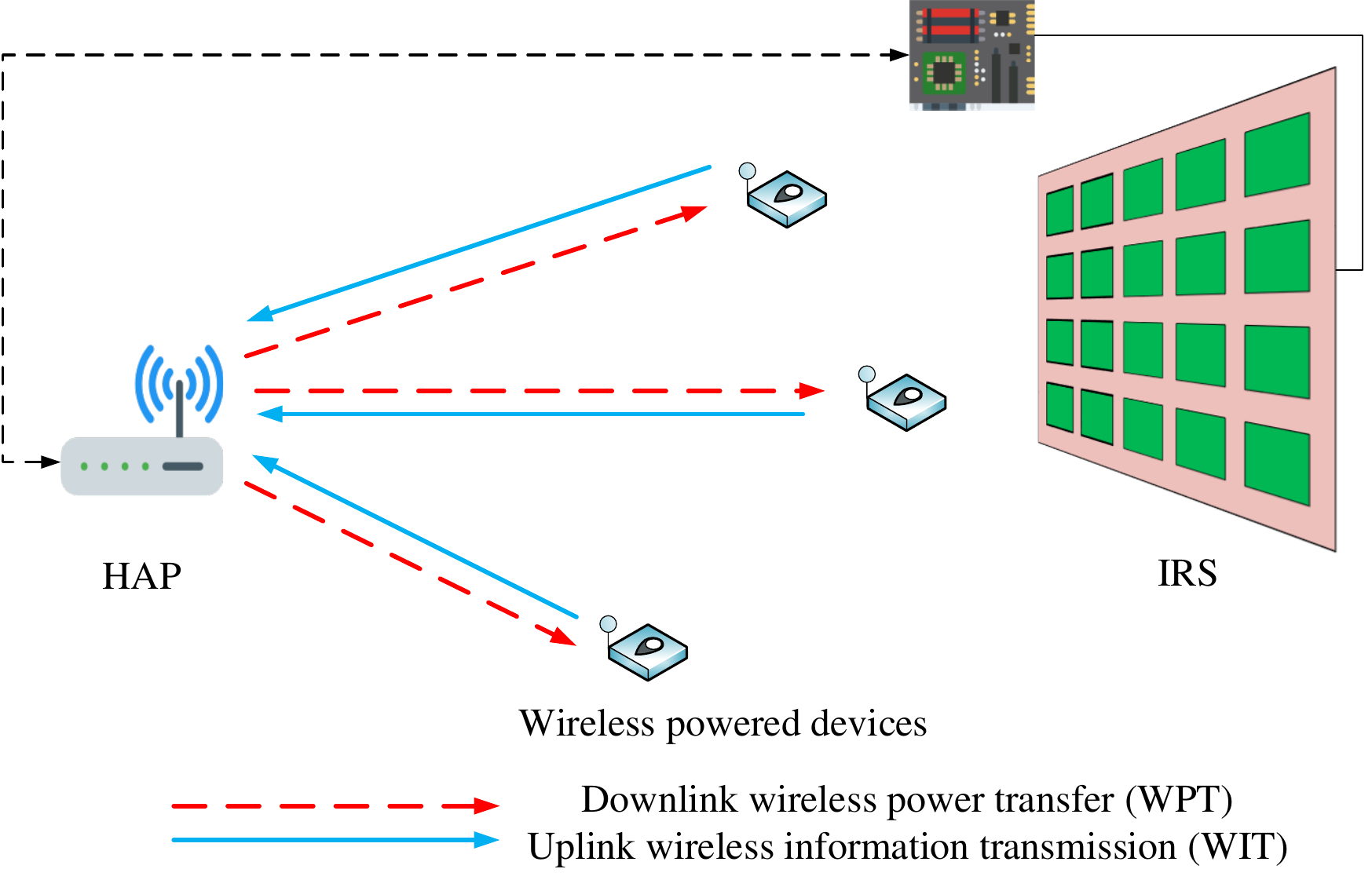}\vspace{-0.5cm}
\caption{An IRS-aided WPCN where the IRS is deployed to boost the efficiencies of both DL WPT and UL WIT.}\label{system:model} \vspace{-0.2cm}
\end{figure}

\section{System Model and Problem Formulation}
\subsection{System Model}

As depicted in Fig. 1, we consider an IRS-assisted WPCN, which is composed of an HAP, an IRS, and $K$ wireless-powered IoT devices. It is assumed that the IRS is equipped with $N$ reflecting elements and the HAP and IoT devices are all equipped with a single antenna.  In particular, the HAP with constant power supply (e.g., power grid) coordinates the DL WPT and UL WIT to and from the IoT devices with the assistance of the IRS.  For the ease of practical implementation, the HAP and all devices are assumed to operate over the same frequency band, with the total available transmission time denoted by $T_{\max}$.
 Specifically, the typical ``harvest and then transmit'' protocol is adopted for the WPCN \cite{ju14_throughput,531} where the IoT devices first harvest energy from the signal emitted by the HAP in the DL and then use the harvested energy to transmit their own information to the HAP in the UL. For notation convenience, the UL and DL channel reciprocity is assumed for all the channels and they follow the quasi-static flat-fading model. In other words, the channel coefficients remain constant during each transmission block, but can vary from one to another.  This also facilitates the DL channel state information (CSI) acquisition based on the UL training. In practice, since the IRS is supposed to be implemented with low cost and low energy consumption, its computational capability is usually limited, which may not be able to afford the computational task for  periodically executing the algorithm. As such,  the HAP  is assumed to be in charge of executing the algorithm and then feedback its optimized phase-shift vectors to the IRS for setting the reflection over time.

 To characterize the performance upper bound of the IRS-aided WPCN system via joint dynamic beamforming design and resource allocation, it is assumed that  the CSI of all channels involved  is perfectly known  at the HAP, based on the various channel acquisition methods discussed in \cite{JR:wu2019IRSmaga}. Signalling overhead and incomplete CSI will result in performance loss and the study of  their impacts on the system performance  is beyond the scope of this paper.   The equivalent baseband  channels  from the HAP to the IRS, from the IRS to device $k$, and from the HAP to device $k$ are denoted by $\bm{g}\in \mathbb{C}^{N\times 1}$, $\bm{h}^H_{r,k}\in \mathbb{C}^{1\times N}$, and ${h}^H_{d,k}\in \mathbb{C}$, respectively, where $k = 1, \cdots,K$.

\subsection{Dynamic IRS Beamforming for DL WPT and UL WIT}

\begin{figure}[!t]
\centering
\includegraphics[width=6.6in]{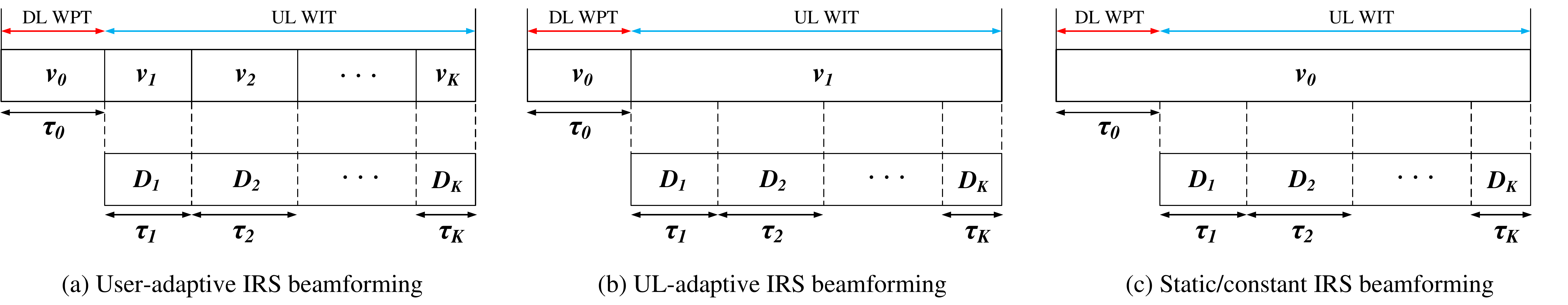}\vspace{-3mm}
\caption{Then transmission protocol for the proposed WPCN with three IRS beamforming configurations.}\label{Protocol:special:case}  \vspace{-0.2cm}
\end{figure}

For DL WPT, the HAP broadcasts an energy signal with constant transmit power $P_{\rm A}$ for a duration  of $\tau_0$.
 The energy harvested from the noise is assumed to be negligible as in \cite{ju14_throughput},  since  the noise power is much smaller than the  power received from the HAP. Let $\ttheta_0 = \text{diag} ( e^{\jmath \theta_{0,1}}, \cdots,  e^{\jmath\theta_{0,N}})$ denote the reflection phase-shift  matrix\footnote{{Note that we consider the unit reflection amplitude for the IRS elements in this paper. While other IRS reflection coefficient models such as phase-shift dependent amplitude model, can be similarly considered for the proposed dynamic IRS beamforming schemes, which are left for future work.}} of the IRS for DL WPT where $\theta_n\in [0, 2\pi), \forall n$.
 Thus, the amount of harvested energy  at device $k$ can be expressed as\footnote{{In this paper, we consider a linear energy harvesting model for simplicity. By replacing \eqref{eq3} with its non-linear counterpart,  the three dynamic IRS beamforming schemes and also the general optimization framework proposed in the paper are still applicable to the case with a non-linear energy harvesting model \cite{Xu2020}, whereas new transformations and approximations may be required for the specific algorithm design, which are left for future work.}}
\begin{align}\label{eq3}
E^h_k=\eta_kP_{\rm A}|h^H_{d,k} +  \bm{h}^H_{r,k}\ttheta_0 \bm{g}|^2\tau_0 =\eta_kP_{\rm A}|h^H_{d,k} +   \bm{q}_k^H \vvv_0|^2\tau_0,
\end{align}
where $\eta_k \in (0,1]$ is the  energy conversion efficiency of device $k$, $\q^H_k= \bm{h}^H_{r,k}   \text{diag}(\bm{g})$, and $\vvv_0 = [e^{\jmath\theta_{0,1}}, \cdots, e^{\jmath\theta_{0,N}}]^T$.

For UL WIT, each energy harvesting device transmits its own  information signal to the HAP for a duration of $\tau_k$  with transmit power $p_k$. Furthermore, as shown in Fig. \ref{Protocol:special:case}, we propose three IRS beamforming setups depending on how the IRS sets its phase shifts over time during UL WIT, as detailed below.

{1) {\it User-adaptive dynamic IRS beamforming}: In this case, the IRS is allowed to reconfigure its phase-shift patterns/vectors {\it $K$} times in UL WIT and  each vector  is dedicated to one device. Accordingly, the achievable rate of device $k$ in bits/Hz  can be expressed as
\begin{align}\label{eq6}
r_k=\tau_k \log_2\left(1+\frac{p_k |h^H_{d,k} + \q^H_k \vvv_k|^2}{\sigma^2}\right),
\end{align}
where  $\vvv_k = [ e^{\jmath\theta_{k,1}}, \cdots,  e^{\jmath\theta_{k,N}}]^T$  denotes the  IRS phase shift vector for device $k$ during $\tau_k$, and  $\sigma^2$ is the additive white Gaussian noise power at the HAP. }

{ 2)  {\it UL-adaptive dynamic IRS beamforming}:  In this case, the IRS is allowed to reconfigure its phase-shift vector only {\it one} time in UL WIT and thus all the devices share the common IRS phase-shift vector. Accordingly, the achievable rate of device $k$ in bits/Hz  can be expressed as
\begin{align}\label{eq6}
r_k=\tau_k \log_2\left(1+\frac{p_k |h^H_{d,k} + \q^H_k \vvv_1|^2}{\sigma^2}\right),
\end{align}
where  $\vvv_1 = [ e^{\jmath\theta_{1,1}}, \cdots,  e^{\jmath\theta_{1,N}}]^T$  denotes the common IRS phase-shift vector for a total time of  $\sum_{k=1}^{K}\tau_k$. }

{   3) {\it Static IRS beamforming}:   In this case, the IRS is not allowed to reconfigure its phase-shift vector in UL WIT and thus all the devices need to share the same IRS phase-shift vector as that in DL WPT. Accordingly, the achievable rate of device $k$ in bits/Hz  can be expressed as
\begin{align}\label{eq6}
r_k=\tau_k \log_2\left(1+\frac{p_k |h^H_{d,k} + \q^H_k \vvv_0|^2}{\sigma^2}\right),
\end{align}
where $\vvv_0$ is given in \eqref{eq3}.  }

\begin{remark}
\rm {Note that the above three cases strike a balance between the degrees of freedom to adjust the IRS phase-shift vector and the number of optimization variables as well as the feedback signalling overhead. Specifically, the user-adaptive IRS beamforming requires the HAP to optimize and feedback $(K+1)N$ IRS phase shifts (including $N$ IRS phase shifts in DL WPT) to the IRS, which linearly increases with the number of devices.  Whereas these required for  UL-adaptive dynamic IRS beamfroming and static beamforming cases are $2N$ and $N$, respectively, which may be more cost-effective especially when $K$ is practically large.} 
\end{remark}
\subsection{Problem Formulations}

 Our objective is to maximize the  \emph{weighted sum throughput} of the considered WPCN by jointly optimizing the IRS phase shifts, the time allocation, and the transmit powers. For the  user-adaptive dynamic IRS beamforming case, the optimization problem is formulated as
\begin{subequations} \label{probm10}
\begin{align}\label{eq10}
\text{(P1)}: ~~ \mathop {\mathrm{max} }\limits_{{\tau_{0},\{\tau_{k}\},\{p_{k}\}, \vvv_0,\{\vvv_{k}\} } }~~ &\sum_{k=1}^{K} w_k\tau_k  \log_2\left(1+\frac{p_k |h^H_{d,k} + \q^H_k \vvv_k|^2}{\sigma^2}\right) \\
\mathrm{s.t.} ~~~~~~~&  {p_k}\tau_k\leq    \eta_kP_{\rm A}|h^H_{d,k} +   \bm{q}_k^H \vvv_0|^2\tau_0, ~ \forall\, k, \label{P1:EH} \\
& |[\vvv_0]_n|=1,  n=1,\cdots, N, \label{P1:eq:modulus1} \\
& |[\vvv_k]_n|=1,  n=1,\cdots, N, \forall k, \label{P1:eq:modulus2}\\
& \tau_{0}+\sum_{k=1}^{K}\tau_k\leq T_{\mathop{\max}},  \label{SecII:eq402} \\
& \tau_{0}\geq0, ~  \tau_k\geq  0,  ~p_k\geq  0, ~\forall k.  \label{SecII:eq403}
\end{align}
\end{subequations}
where  $w_k$ denotes the weight of device $k$. By varying the values of these weights, the system designer is able to set different priorities and enforce certain notions of fairness among devices.  Since the weights do not affect the algorithm design, we assume that all the devices are equally weighted in this paper without loss of generality, i.e., $w_k=1, \forall k$.
In (P1), \eqref{P1:EH} is the energy causality constraint which ensures that the energy consumed by each device for WIT does not exceed its total energy harvested during WPT. \eqref{SecII:eq402} and \eqref{SecII:eq403} are the total time constraint and the non-negativity constraints on the optimization variables, respectively.  
The optimization problems  with UL-adaptive dynamic IRS beamforming and static IRS beamforming can be similarly formulated as
\begin{subequations} \label{probm10}
\begin{align}\label{eq10}
\text{(P2)}: ~~ \mathop {\mathrm{max} }\limits_{{\tau_{0},\{\tau_{k}\},\{p_{k}\}, \vvv_0, \vvv_{1} } }~~ &\sum_{k=1}^{K} \tau_k \log_2\left(1+\frac{p_k |h^H_{d,k} + \q^H_k \vvv_1|^2}{\sigma^2}\right) \\
\mathrm{s.t.} ~~~~~~~
&\eqref{P1:EH}, \eqref{SecII:eq402},  \eqref{SecII:eq403},\\
& |[\vvv_0]_n|=1,  n=1,\cdots, N, \label{eq:modulus1} \\
& |[\vvv_1]_n|=1,  n=1,\cdots, N. \label{eq:modulus2}
\end{align}
\end{subequations}
\begin{subequations} \label{probm109}
\begin{align}\label{eq10}
\text{(P3)}: ~~ \mathop {\mathrm{max} }\limits_{{\tau_{0},\{\tau_{k}\},\{p_{k}\}, \vvv_0 } }~~ &\sum_{k=1}^{K} \tau_k   \log_2\left(1+\frac{p_k |h^H_{d,k} + \q^H_k \vvv_0|^2}{\sigma^2}\right) \\
\mathrm{s.t.} ~~~~~~~
&\eqref{P1:EH},  \eqref{SecII:eq402},  \eqref{SecII:eq403},\\
& |[\vvv_0]_n|=1,  n=1,\cdots, N. \label{eq:modulus1}
\end{align}
\end{subequations}

\subsection{Impact of Dynamic IRS Beamforming}
Before proceeding to solving the problems, we provide the following proposition to unveil the effectiveness of dynamic IRS beamforming. Denote the optimal objective values of (P1), (P2), and (P3) by $R^*_{\rm U-adp}$, $R^*_{\rm UL/DL-adp}$, and $R^*_{\rm sta}$, respectively.
\begin{proposition}
In the optimal solutions to (P1), (P2), and (P3), it follows that
\begin{align}
R^*_{\rm User-adp} \geq R^*_{\rm UL-adp} =R^*_{\rm Static}.
\end{align}
\end{proposition}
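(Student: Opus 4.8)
The plan is to dispose of the two ``$\ge$'' relations by specialization and then put all the work into the one non-obvious inequality $R^*_{\rm UL-adp}\le R^*_{\rm Static}$. Any feasible point of (P2) becomes a feasible point of (P1) upon setting $\vvv_k:=\vvv_1$ for every $k$, with the same objective value, so $R^*_{\rm User-adp}\ge R^*_{\rm UL-adp}$; likewise any feasible point of (P3) becomes a feasible point of (P2) upon setting $\vvv_1:=\vvv_0$, so $R^*_{\rm UL-adp}\ge R^*_{\rm Static}$. It therefore remains to show $R^*_{\rm UL-adp}\le R^*_{\rm Static}$.

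Fix an optimal solution $(\tau_0^\star,\{\tau_k^\star\},\{p_k^\star\},\vvv_0^\star,\vvv_1^\star)$ of (P2). I would first note that, without loss of optimality, the energy-causality constraints \eqref{P1:EH} may be taken tight: for a device with $\tau_k^\star>0$ one can raise $p_k^\star$ up to the bound in \eqref{P1:EH} without decreasing the objective or violating any other constraint, while devices with $\tau_k^\star=0$ contribute nothing and can be discarded. Writing $\beta_k:=\eta_kP_{\rm A}\tau_0^\star/(\sigma^2\tau_k^\star)\ge0$, $x_k:=|h^H_{d,k}+\q_k^H\vvv_0^\star|^2$ and $y_k:=|h^H_{d,k}+\q_k^H\vvv_1^\star|^2$, substituting $p_k^\star\tau_k^\star=\eta_kP_{\rm A}x_k\tau_0^\star$ into the objective of (P2) gives
\[
R^*_{\rm UL-adp}=\sum_{k:\,\tau_k^\star>0}\tau_k^\star\,\log_2\!\big(1+\beta_k x_k y_k\big).
\]
I would then exhibit, for $j\in\{0,1\}$, a feasible point of (P3) that keeps $\tau_0^\star$ and $\{\tau_k^\star\}$, uses the single phase vector $\vvv_j^\star$, and sets $p_k:=\eta_kP_{\rm A}|h^H_{d,k}+\q_k^H\vvv_j^\star|^2\tau_0^\star/\tau_k^\star$; each obeys \eqref{P1:EH} with equality and inherits \eqref{SecII:eq402}--\eqref{SecII:eq403}, so $R^*_{\rm Static}\ge R_j$, where $R_0:=\sum_k\tau_k^\star\log_2(1+\beta_k x_k^2)$ and $R_1:=\sum_k\tau_k^\star\log_2(1+\beta_k y_k^2)$.

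The crux is the elementary per-device bound $\tfrac12\big[\log_2(1+\beta_k x_k^2)+\log_2(1+\beta_k y_k^2)\big]\ge\log_2(1+\beta_k x_k y_k)$, valid since $(1+\beta_k x_k^2)(1+\beta_k y_k^2)-(1+\beta_k x_k y_k)^2=\beta_k(x_k-y_k)^2\ge0$ (the case $y_k=0$ being trivial). Multiplying by $\tau_k^\star$ and summing over $k$ gives $\tfrac12(R_0+R_1)\ge R^*_{\rm UL-adp}$, so $R^*_{\rm Static}\ge\tfrac12(R_0+R_1)\ge R^*_{\rm UL-adp}$; together with $R^*_{\rm UL-adp}\ge R^*_{\rm Static}$ this forces $R^*_{\rm UL-adp}=R^*_{\rm Static}$, and with the first step, $R^*_{\rm User-adp}\ge R^*_{\rm UL-adp}=R^*_{\rm Static}$. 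I expect the only subtle points to be the reduction to tight energy constraints and the bookkeeping for inactive devices; everything else is the specialization argument and the one-line identity $\beta(x-y)^2\ge0$.
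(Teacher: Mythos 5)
Your proof is correct and follows essentially the same route as the paper's Appendix~A: the specialization arguments dispose of the two easy inequalities, and the crux is the same per-device bound $1+\beta_k x_k y_k \le \sqrt{(1+\beta_k x_k^2)(1+\beta_k y_k^2)}$ (the paper's Lemma~2, in your case verified via $\beta_k(x_k-y_k)^2\ge 0$) applied after tightening the energy-causality constraints. Your closing step --- bounding $R^*_{\rm UL-adp}$ by the average of two explicitly constructed feasible points of (P3) rather than by a single maximizing phase vector $\vvv^*$ as in the paper's step $(b)$ --- is a minor, and arguably slightly cleaner, variant of the same argument.
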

\begin{proof}
Please refer to Appendix A.
\end{proof}

Proposition 1 provides two interesting insights into the effect of dynamic IRS beamforming on the system sum throughput, summarized as follows.
\begin{itemize}
  \item First, the user-adaptive scheme generally outperforms its two special cases. This can be intuitively understood by considering a simple two-device system where employing two independent IRS phase-shift vectors each for assisting the UL WIT of one device generally outperforms the case applying the same phase-shift vector for UL WIT of both device (and DL WPT), due to the higher design flexibility of the former than that of the latter.
  \item   Second, it is somehow surprisingly to note that the UL-adaptive beamforming scheme does not bring performance improvement over the static beamforming scheme, which implies that employing the constant IRS beamforming suffices to maximize the sum throughput, even when two IRS phase-shift vectors can be applied. Based on this result,  if the HAP is in charge of computing the IRS phase shifts, it only needs to feed back $N$ phase-shift values (i.e., $\vvv_0$) to the IRS, rather than $2N$ (i.e., $\vvv_0$ and $\vvv_1$), which reduces the signalling overhead and the associated delay, especially for practically large $N$.
\end{itemize}
  Exploiting Proposition 1, we only need to solve (P1) and (P3) next, since the latter involves a smaller number of optimization variables than (P2). {In Table \ref{table1}, we summarize the considered IRS beamforming setups and their corresponding algorithms to be elaborated.}

{\begin{table}[!t]
\centering
 {\caption{Summary of the proposed IRS beamforming setups and algorithms.}\label{table1} }
 { \begin{tabular}{|l|l|l|}
\hline
Proposed IRS beamforming design                                                                          & Proposed algorithm                 & Section index \\ \hline
User-adaptive IRS scheme                                                                  & SDR with  Gaussian randomization                                & Section III-A \\ \hline
UL-adaptive IRS/Static IRS scheme                                                         & SCA  with relaxation                              & Section III-B \\ \hline
\multirow{2}{*}{\begin{tabular}[c]{@{}l@{}}General optimization\\ framework\end{tabular}} & Generic joint optimization          & Section IV-B  \\ \cline{2-3}
                                                                                          & Generic low-complexity optimization & Section IV-C  \\ \hline
\end{tabular}}
\end{table}

\section{Proposed Algorithms for (P1) and (P3)}
\subsection{Proposed Algorithm for (P1)}
For (P1), it is first observed that besides the unit-modulus phase-shift constraints, $\vvv_k$'s are only involved in the objective function and each of them  appears exclusively in its own achievable throughput without mutual coupling. In other words, all $\vvv_k$'s are separable in (P1), which suggests that the optimal $\vvv_k$'s can be independently obtained by solving $K$ subproblems in parallel, each with only one phase-shift vector.  Specifically, for $\vvv_k$, the optimal solution can be obtained by solving the following problem (by ignoring constant terms)
\begin{align}
\mathop {\mathrm{max} }\limits_{\vvv_{k} }~~ &{ |h^H_{d,k} + \q^H_k \vvv_k|^2} \\
\mathrm{s.t.} ~~& |[\vvv_k]_n|=1,  n=1,\cdots, N. \label{P1:eq:modulus2}
\end{align}
It has been shown in \cite{JR:wu2018IRS} that the optimal phase shifts should align all IRS-reflected and non-IRS-reflected signals to maximize its effective UL channel power gain, which are given by $[\vvv^\star_k]_n= e^{\jmath( \arg\{h^H_{d,k}\} - \arg\{ [\q^H_k]_n\}  )}, \forall n$. Define $\gamma_k \triangleq  |h^H_{d,k} + \q^H_k \vvv^\star_k|^2$ and $  |   {\bar \q}^H_k \bar \vvv_0   | \triangleq | h^H_{d,k} +  \q^H_k \vvv_0  |  $, where ${\bar\vvv}_0 = [ \vvv^H_0 \: 1]^H$ and $ {\bar \q}^H_k  =  [  {\q}^H_k  \: h^H_{d,k}] $. Then, (P1) can be written as
\begin{subequations} \label{SecIII:probm10}
\begin{align}\label{eq10}
\mathop {\mathrm{max} }\limits_{{\tau_{0},\{\tau_{k}\},\{p_{k}\}, {\bar \vvv}_0 } }~~ &\sum_{k=1}^{K} \tau_k\log_2\left(1+  \frac{p_k \gamma_k   }{\sigma^2}   \right) \\
\mathrm{s.t.} ~~~~~~~&  {p_k}\tau_k\leq \eta_kP_{\rm A} | {\bar \q}^H_k \bar \vvv_0   |^2\tau_0, ~ \forall k, \label{SecIII:EH} \\
& \eqref{SecII:eq402},  \eqref{SecII:eq403}, |[\bar \vvv_0]_n|=1,  n=1,\cdots, N+1.  
\end{align}
\end{subequations}
 Define $\Q_k={\bar \q}_k{\bar \q}^H_k$ and $\bm{V}_0=\bm{\bar{v}}_0\bm{\bar{v}}_0^H$ which needs to satisfy  $\bm{V}_0\succeq \bm{0}$ and ${\rm{rank}}(\bm{V}_0)=1$. Then, for constraints \eqref{SecIII:EH}, it follows that $| {\bar \q}^H_k \bar \vvv_0   |^2 =  {\bar \q}^H_k \bar \vvv_0 \bar \vvv^H_0 {\bar \q}_k  = {\rm{Tr}}(\Q_k\V_0  ), \forall k$.  Furthermore,  the unit-modulus constraints, i.e., $|[\bar \vvv_0]_n|=1$,  are equivalent to  $[\V_0]_{n,n} = 1, n=1,\cdots,N+1$.
For problem \eqref{SecIII:probm10}, we apply a change of variables as  $e_k=\tau_kp_k$ and $\W_0= \tau_0\V_0$, which yields
\begin{subequations}
\begin{align}\label{SecIII:P1}
\text{(P1')}: ~~ \mathop {\mathrm{max} }\limits_{{\tau_{0},\{\tau_{k}\},\{e_{k}\}, \W_0 } }~~ &\sum_{k=1}^{K} \tau_k\log_2\left(1+ \frac{e_k}{\tau_k}\frac{\gamma_k}{\sigma^2}   \right) \\
\mathrm{s.t.} ~~~~~~~&  e_k \leq \eta_kP_{\rm A}  {\rm{Tr}}(\Q_k\W_0  ), ~ \forall k,  \\
& [\W_0]_{n,n} = \tau_0, ~ n=1,\cdots, N+1, \label{P6:C9}\\
& {\rm{rank}}(\W_0)=1,  \label{SecIII:rank1}\\
& \eqref{SecII:eq402},  \tau_{0}\geq0, ~  \tau_k\geq  0,  ~e_k\geq  0, ~\forall k.
\end{align}
\end{subequations}
{Note that in (P1'),   $\tau_k\log_2\left(1+ \frac{e_k}{\tau_k}\frac{\gamma_k}{\sigma^2}  \right)  $ in the objective function is jointly concave with respect to $ \tau_k$ and $e_k$, since  its corresponding Hessian matrix is negative semidefinite.  Furthermore,  the constraints except  \eqref{SecIII:rank1} are all affine.}
Then, it is not difficult to verify that by relaxing the rank-one constraint  \eqref{SecIII:rank1}, (P1') becomes a convex optimization problem and can be solved using existing convex optimization solvers such as CVX.
{Such an approach which relaxes the rank-one constraint helps obtain an upper bound for evaluating the performance loss of other suboptimal algorithm. Whereas in the case that the eventually obtained solution for solving (P1') is not rank-one, Gaussian randomization can be applied to attain a rank-one solution, based on which the rest optimization variables can be obtained optimally by  solving (P1).
More importantly, it also helps evaluate the performance of dynamic IRS beamforming with an arbitrary given number of phase-shift vectors, as  detailed later in Section IV.}

\vspace{-0.2cm}
\subsection{Proposed Algorithm for (P3)}

Different from (P1), $\vvv_0$ in (P3) is coupled not only in the energy harvesting constraints \eqref{P1:EH}, but also in all derives' achievable throughput in the objective function. As such,  the above algorithm proposed for (P1)  is not applicable to the more challenging  (P3), which thus calls for new algorithm design. {To this end, we observe that  in the optimal solution to  (P3),  the energy harvesting constraints  \eqref{P1:EH} are met with equalities since otherwise $p_k$ can be always increased to improve the objective value until  \eqref{P1:EH}  becomes active. Then, substituting \eqref{P1:EH} into the objective function eliminates $\{p_k\}$, which yields}
\begin{subequations} \label{SecIII:P3}
\begin{align}
 ~~ \mathop {\mathrm{max} }\limits_{   {\tau_{0},\{\tau_{k}\},\{p_{k}\}, {\bar \vvv}_0 } }~~ &\sum_{k=1}^{K} \tau_k   \log_2\left(1+\frac{\eta_kP_{\rm A} \tau_0  |{\bar \q}^H_k \bar \vvv_0 |^4}{\tau_k\sigma^2}\right) \label{SecIII:P3obj} \\
\mathrm{s.t.} ~~~~~~~
& \eqref{SecII:eq402},  \eqref{SecII:eq403}, |[{\bar \vvv}_0]_n|=1,  n=1,\cdots, N+1,   
\end{align}
\end{subequations}
where $  |   {\bar \q}^H_k \bar \vvv_0   |=| h^H_{d,k} +  \q^H_k \vvv_0  |$ as in Section III-A.
To deal with the non-convex objective function \eqref{SecIII:P3obj}, we introduce a set of slack variables $S_k$'s and reformulate  problem  \eqref{SecIII:P3} as follows
\begin{subequations} \label{probm:JO:20}
 {  \begin{align}
 \mathop {\mathrm{max} }\limits_{  {\tau_{0},\{\tau_{k}\}, {\bar \vvv}_0, S_k }  } &  \sum_{k=1}^{K} \tau_k  \log_2\left(1+\frac{S_k}{{\tau}_{\rm 1}\sigma^2}\right)\\
\mathrm{s.t.} ~~& S_k \leq  {P_{\rm A}  \eta_k    |   {\bar \q}^H_k \bar \vvv_0   | ^4\tau_0  }, \forall k, \label{SecIII:eqJO201}\\
& \eqref{SecII:eq402},  \eqref{SecII:eq403},\\
& |[\bar \vvv_0]_n|=1,  n=1,\cdots, N+1.  \label{SecIII:eq:JO:modulus1}
\end{align}}
\end{subequations}
{Note that for the optimal solution of problem \eqref{probm:JO:20}, constraint  \eqref{SecIII:eqJO201} is met with equality, since otherwise we can always  increase the objective value by increasing $S_k$ until  \eqref{SecIII:eqJO201} becomes active.} However, constraints  \eqref{SecIII:eqJO201} and \eqref{SecIII:eq:JO:modulus1} are still non-convex. To deal with constraints  \eqref{SecIII:eqJO201}, we introduce the following lemma.
\begin{lemma}
  For $\tau_0> 0$,   $\tau_0  |   {\bar \q}^H_k \bar \vvv_0   | ^4$    is jointly convex with respect to $ \bar \vvv_0$ and $\frac{1}{\sqrt{\tau_0}}$.
\end{lemma}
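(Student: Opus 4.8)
The plan is to rewrite $\tau_0|\bar{\q}_k^H\bar{\vvv}_0|^4$ in a form that makes its joint convexity in $(\bar{\vvv}_0, 1/\sqrt{\tau_0})$ transparent. First I would introduce the substitution $u = 1/\sqrt{\tau_0}$, so that $\tau_0 = 1/u^2$ and the expression becomes $|\bar{\q}_k^H\bar{\vvv}_0|^4/u^2$. Writing $a = |\bar{\q}_k^H\bar{\vvv}_0|$, which is a convex function of $\bar{\vvv}_0$ (it is the absolute value of an affine function, i.e.\ a norm of an affine map), the quantity of interest is $f(a,u) = a^4/u^2$ for $a \ge 0$, $u > 0$. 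The key observation is that $a^4/u^2$ can be expressed as $(a^2/u)^2 \cdot (1) $ — more usefully, I would recognize $g(x,y) = x^2/y$ (the quadratic-over-linear function) as jointly convex on $\{y>0\}$, and then note that $a^4/u^2 = (a^2)^2/u^2$ is the composition of the quadratic-over-linear function applied to... actually the cleanest route: $a^4/u^2 = \big((a^2/u)\big)^2$ where $a^2/u$ is itself jointly convex and nonnegative, but squaring a convex nonnegative function is convex only if the inner function is also appropriately monotone, which it is not jointly.

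So instead I would proceed via the perspective-function / composition route more carefully. The central fact I would invoke is: the function $h(x_1,\dots,x_m,y) = \|\x\|^4/y^2$ for $y>0$ is jointly convex; equivalently $\|\x\|^4/y^2$ is a jointly convex function of $(\x, y)$ on $\mathbb{R}^m \times \mathbb{R}_{>0}$. This can be established by writing it as the composition $\phi(t, y) = t^2/y^2$ evaluated at $t = \|\x\|^2$... but $t\mapsto t^2/y^2$ is convex and increasing in $t\ge0$ while $t=\|\x\|^2$ is convex, yet $\phi$ must also be jointly convex in $(t,y)$. Let me instead use the Schur-complement / epigraph characterization: $\|\x\|^4/y^2 \le z$ with $y>0$ is equivalent to the existence of $s\ge 0$ with $\|\x\|^2 \le s$ and $s^2 \le zy^2$, i.e.\ $\|\x\|^2 \le s$ and $s \le \sqrt{z}\,y$ (for $z\ge0$), and the set $\{(s,z,y): s^2\le zy^2,\ y>0,\ s,z\ge0\}$ is a rotated second-order-cone-type convex set (it is $\{s/y \le \sqrt z\}$ scaled). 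Thus the epigraph of $f$ in the variables $(\bar{\vvv}_0, u)$ is convex (intersection of the convex set $\|\bar{\q}_k^H\bar{\vvv}_0\|^2 \le s$ with the convex cone above, both lifted), which proves joint convexity. Finally, since $u = 1/\sqrt{\tau_0}$ is exactly the stated variable, substituting back gives the claim for $\tau_0 > 0$.

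Alternatively — and this is the version I would actually write out — I would verify convexity directly via the Hessian or via a one-line composition argument: note that for $y = 1/\sqrt{\tau_0}$, $\tau_0 |\bar{\q}_k^H\bar{\vvv}_0|^4 = |\bar{\q}_k^H\bar{\vvv}_0|^4 / y^2$, and the map $(\bm{w}, y) \mapsto \|\A\bm{w}+\bm{b}\|^4/y^2$ is jointly convex because it is the composition of (i) the affine map $(\bm{w},y)\mapsto (\A\bm{w}+\bm{b}, y)$, (ii) the jointly convex function $(\bm{z}, y)\mapsto \|\bm{z}\|^4/y^2$. For step (ii), I would cite the standard fact that $(\bm{z},y)\mapsto \|\bm{z}\|^2/y$ is jointly convex for $y>0$ (quadratic-over-linear, a textbook example, e.g.\ Boyd--Vandenberghe), call it $p(\bm{z},y)$, and then observe $\|\bm{z}\|^4/y^2 = p(\bm{z},y)^2$ where $p$ is nonnegative and convex; since $p$ is convex and nonnegative, and $t\mapsto t^2$ is convex and nondecreasing on $[0,\infty)$, the composition $p^2$ is convex. (The earlier worry about monotonicity does not arise here: the outer function $t\mapsto t^2$ need only be monotone on the range of $p$, which is $[0,\infty)$, and it is.) Here $\A, \bm{b}$ are such that $\A\bar{\vvv}_0 + \bm{b}$ reproduces $\bar{\q}_k^H\bar{\vvv}_0$ (or one works directly with the scalar $\bar{\q}_k^H\bar{\vvv}_0$ and $|\cdot|^2$).

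The main obstacle is purely the bookkeeping around step (ii): making sure the "square of a nonnegative convex function is convex" argument is applied with the correct monotonicity domain, and handling the complex-valued inner product $\bar{\q}_k^H\bar{\vvv}_0$ correctly — I would treat $|\bar{\q}_k^H\bar{\vvv}_0|$ as $\|\,[\mathrm{Re}(\bar{\q}_k^H\bar{\vvv}_0),\ \mathrm{Im}(\bar{\q}_k^H\bar{\vvv}_0)]\,\|_2$, an affine-then-Euclidean-norm function of the real and imaginary parts of $\bar{\vvv}_0$, so that $|\bar{\q}_k^H\bar{\vvv}_0|^2$ is a convex quadratic and the quadratic-over-linear template applies verbatim. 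No domain issues beyond $\tau_0>0$ (equivalently $y>0$) arise, which is exactly the hypothesis.
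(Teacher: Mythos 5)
Your proposal is correct and follows essentially the same route as the paper: write the quantity as $\bigl(|\bar{\q}_k^H\bar{\vvv}_0|^2/y\bigr)^2$ with $y=1/\sqrt{\tau_0}$, invoke joint convexity of the quadratic-over-linear function, and then square it via the composition rule with the nondecreasing convex outer function $t\mapsto t^2$ on $[0,\infty)$. Your explicit care about the monotonicity domain and about treating $\bar{\q}_k^H\bar{\vvv}_0$ as an affine map into $\mathbb{R}^2$ only tightens details the paper leaves implicit; no substantive difference.
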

\begin{proof}
For $x\geq 0$ and $y> 0$, it is not difficult to show that $\frac{x^2}{y}$ is jointly  convex with respect to $x$ and $y$. Furthermore, for $\frac{x^2}{y}\geq 0$ and $p\geq 1$, it follows that $(\frac{x^2}{y})^p$ is jointly convex with respect to $x$ and $y$ by invoking the composition rule of convexity \cite{Boyd} [Chapter 3.2, Page 84]. Setting $p=2$, $x=  |   {\bar \q}^H_k \bar \vvv_0   |$, and $y= \frac{1}{\sqrt{\tau_0}}$, we obtain the convexity of $\tau_0  |   {\bar \q}^H_k \bar \vvv_0   | ^4$ with respect to $ \bar \vvv_0$ and $\frac{1}{\sqrt{\tau_0}}$.
\end{proof}
Recall that any convex function is globally lower-bounded by its first-order Taylor expansion at any point. This thus motivates us to apply the SCA technique for solving problem \eqref{probm:JO:20}. Therefore, with given local point $ \w_0$ and $t_0$, we obtain the following lower bound for $\tau_0  |   {\bar \q}^H_k \bar \vvv_0   | ^4 = \left( { |   {\bar \q}^H_k \bar \vvv_0   | ^2}/{   \frac{ 1}{  \sqrt{\tau_0} } } \right)^2$ as
\begin{align}\label{SCA_quad:linear}
 \left( \frac{ |   {\bar \q}^H_k \bar \vvv_0   | ^2}{   \frac{ 1}{  \sqrt{\tau_0} } } \right)^2  &\geq   t_0  |   {\bar \q}^H_k  \w_0   | ^4  +2 \left( \frac{ |   {\bar \q}^H_k  \w_0   | ^2}{\frac{ 1}{  \sqrt{t_0} } } \right)\left(     \frac{ 2 \mathrm{Re}\{   \w^H_0\Q_k \bar \vvv_0   \}  }{  \frac{ 1}{  \sqrt{t_0} }   } - \frac{  \w^H_0\Q_k \w_0 }{ \frac{ 1}{  {t_0} } } \frac{ 1}{  \sqrt{\tau_0} }
-       \frac{  \w^H_0\Q_k  \w_0 }{ \frac{ 1}{  {\sqrt{ t_0}} } }   \right)  \nonumber \\
&= 4 t_0|   {\bar \q}^H_k  \w_0   | ^2 \mathrm{Re}\{  \w_0^H\Q_k {\bar \vvv}_0  \} - \frac{  2( \w_0^H\Q_k \w_0 )^2t_0^{\frac{3}{2}}  }{  \sqrt{\tau_0}}  - (\w_0^H\Q_k\w_0)^2t_0   \nonumber \\
& \triangleq f(  {\bar \vvv}_0, \tau_0).
\end{align}
Note that $ f( {\bar \vvv}_0, \tau_0)$ is a jointly concave function with respect to ${\bar \vvv}_0$ and $\tau_0$.  As such, with the lower bound in \eqref{SCA_quad:linear}, constraint  \eqref{SecIII:eqJO201} is transformed to
\begin{align} \label{SECIII:Sk}
 S_k \leq  {P_{\rm A}  \eta_k    f( {\bar \vvv}_0, \tau_0) }, \forall k,
\end{align}
which is now a convex constraint. The remaining challenge to solving problem \eqref{probm:JO:20} is the unit modulus constraints in  \eqref{SecIII:eq:JO:modulus1}. To make it tractable, we relax this constraint as
\begin{align}\label{SECIII:unit:modulus}
 |[\bar \vvv_0]_n|\leq 1,  n=1,\cdots, N+1.
\end{align}
{Then, problem \eqref{probm:JO:20} is approximated as the following problem
 \begin{subequations} \label{Sec:P3:convex}
\begin{align}
 \mathop {\mathrm{max} }\limits_{   {\tau_{0},\{\tau_{k}\}, {\bar \vvv}_0, S_k }   } &     \sum_{k=1}^{K} \tau_k    \log_2\left(1+\frac{S_k}{{\tau}_{\rm 1}\sigma^2}\right)\\
\mathrm{s.t.} ~~
& \eqref{SecII:eq402},  \eqref{SecII:eq403},  \eqref{SECIII:Sk}, \eqref{SECIII:unit:modulus},
\end{align}
\end{subequations}}
which is a convex optimization problem. Thus, we can apply existing convex optimization solvers such as CVX, to successively solve it until the convergence is achieved. {However, the converged solution, denoted by $ {\bar \vvv}^{\star}_0$, may not be able to satisfy the unit-modulus constraints in \eqref{SecIII:eq:JO:modulus1}. In this case, one feasible suboptimal phase-shift vector of problem \eqref{probm:JO:20}, denoted by $ {\bar\vvv}^{*}_0$, can be obtained as
\begin{align}\label{phase::construction}
 [\bar\vvv^{*}_0]_n= [\bar \vvv^{\star}_0]_n/|[\bar \vvv^{\star}_0]_n|, \forall n.
\end{align}
With given ${\bar\vvv}^{*}_0$, problem \eqref{probm:JO:20} becomes a convex optimization problem and the remain optimization variables can be obtained in closed-form expressions as in \cite{ju14_throughput}.}

\section{General Optimization Framework for Dynamic IRS Beamforming}
Motivated by the above two special cases, we consider in this section the general case of dynamic IRS beamforming for WPCNs. Specifically, we first propose a unified optimization framework for an arbitrary number of IRS phase-shift vectors and then propose two algorithms to solve the resulting problem.

\subsection{General Transmission Protocol and Problem Formulation}
Without loss of generality, we assume that reflecting elements at the IRS can be reconfigured $J$ times in total in UL WIT, where  $J\geq 0$ can be either smaller or larger than $K$, corresponding to $J+1$ IRS phase-shift vectors, i.e., $\vvv_j, j=0,\cdots,J$. { In particular,  $J=0$  implies that the UL WIT employs the same IRS phase-shift vector as the DL WPT, i.e., $\vvv_0$, which is the case of static IRS beamforming described in Section II-B.}  
 The detailed transmission protocol is illustrated in Fig. \ref{General:case}  where the DL WPT setup and other system assumptions are the same as those described in Section II.  As such, the  HAP in charge of executing the algorithm needs to send $(J+1)N$ phase-shift values (including that for DL WPT) to  the IRS for setting the reflection over time. By controlling $J$, we are able to control the resulting signalling overhead as well as associated delay.  Furthermore, to fully unleash the potential of dynamic IRS beamforming for WPCNs in UL WIT, it is assumed that each device can  transmit during any of the $J$ IRS phase-shift vectors with loss of generality, as shown in Fig. \ref{General:case}.

\begin{figure}[!t]
\centering
\includegraphics[width=3.7in]{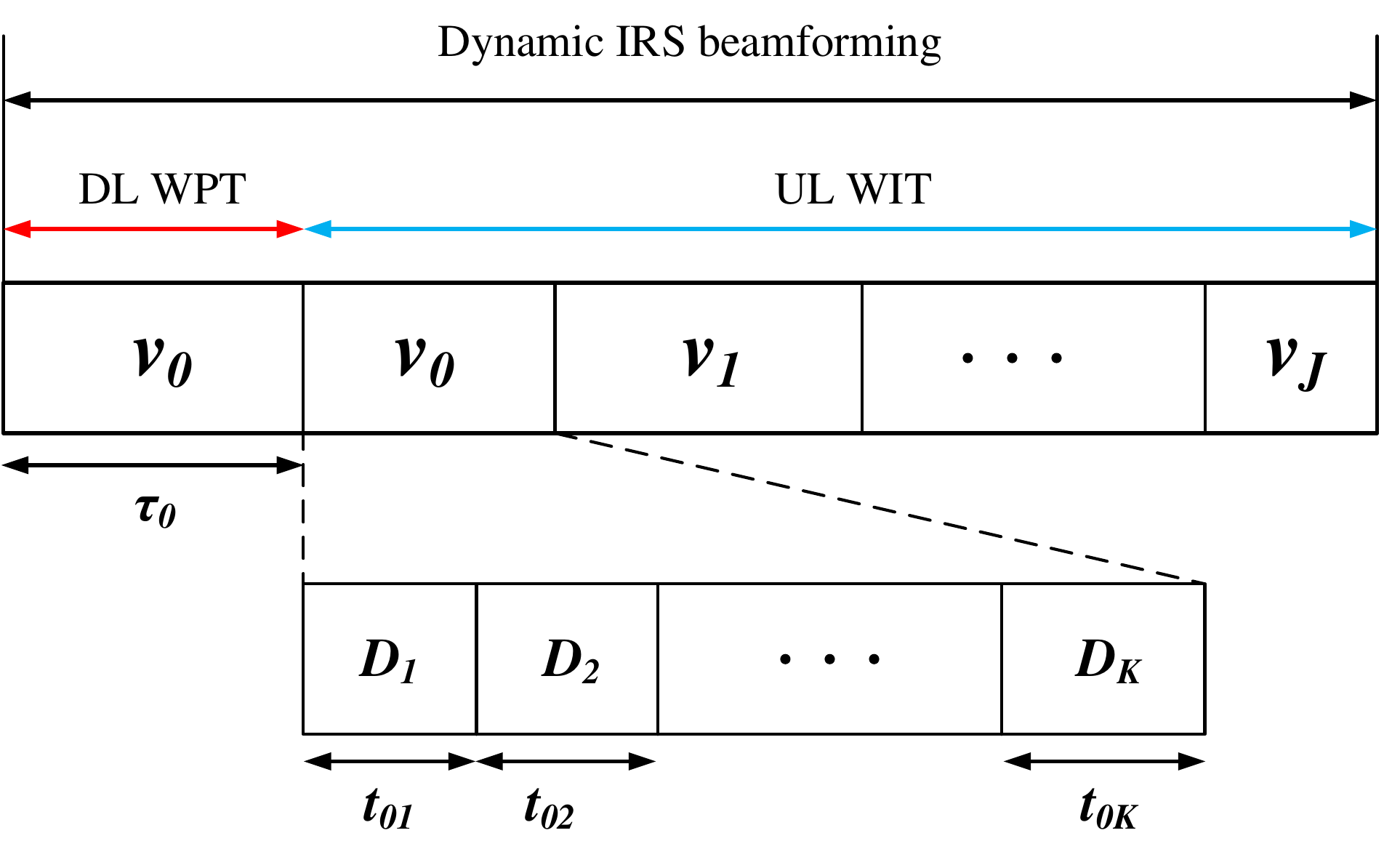}
\caption{General transmission protocol for the proposed WPCNs with dynamic IRS beamforming.}\label{General:case}
\end{figure}

Denote by $t_{k,j}$ and $p_{k,j}$ the time and transmit power of  device $k$ allocated to the $j$th phase-shift vector, i.e., $\vvv_j$. Then, the device $k$'s sum throughput in UL WIT can be expressed as
\begin{align}
R_k=  \sum_{j=0}^{J}r_{k,j} = \sum_{j=0}^{J}t_{k,j}\log_2\left(1 + \frac{p_{k,j} |h^H_{d,k} + \q^H_k \vvv_{j}|^2}{\sigma^2} \right),
\end{align}
with its total transmit energy consumption given by $ \sum_{j=0}^{J}{p_{k,j}}t_{k,j} $.  Accordingly, the system sum throughput maximization problem can be formulated as
\begin{subequations} \label{General:probm10}
\begin{align}\label{eq10}
\text{(P4)}:  \mathop {\mathrm{max} }\limits_{ {\tau_{0},\{t_{k,j}\},\{p_{k,j}\}, \vvv_0,\{\vvv_{j}\} } } & \sum_{k=1}^{K} \sum_{j=0}^{J}t_{k,j}\log_2\left(1+  \frac{p_{k,j } |h^H_{d,k} + \q^H_k \vvv_{j}|^2   }{\sigma^2}   \right) \\
\mathrm{s.t.} ~~~~~& \sum_{j=0}^{J}{p_{k,j}}t_{k,j}\leq \eta_kP_{\rm A}   |h^H_{d,k} + \q^H_k \vvv_{0}|^2    \tau_0, ~ \forall k, \label{P4:eq401} \\
& |[\vvv_j]_n|=1,  n=1,\cdots, N, j =0, \cdots, J,\label{P4:eq:modulus2}\\
& \tau_{0}+\sum_{k=1}^{K}\sum_{j=0}^{J}t_{k,j}\leq T_{\mathop{\max}},  \label{P4:eq402} \\
& \tau_{0}\geq0, ~  t_{k,j}\geq  0,  ~p_{k,j}\geq  0, ~\forall k, j.  \label{eq403}
\end{align}
\end{subequations}
Note that the non-convex problem (P4) is more challenging to solve than problems (P1)-(P3) in Section II-C.
Specifically, different from (P1) that assigns each device with a dedicated IRS phase-shift vector in UL WIT or (P2)/(P3) that assigns all devices with the same IRS phase-shift vector as the one used in DL WPT, it remains unknown  how the $J$ phase-shift vectors are shared among all the devices in the optimal solution of (P4).  {Furthermore, when $J=0$, it can be readily verified that (P4) is reduced to (P3) and hence (P2) due to the equivalence of (P3) and (P2); whereas when $J=K$, it remains unknown whether (P4) is equivalent to (P1) or not.} 

\subsection{Proposed Generic Joint Optimization Algorithm}
Before answering the above questions, we first propose a generic optimization algorithm to solve (P4), elaborated as follows.  Define $  | {\bar \q}^H_k \bar \vvv_j |\triangleq  | h^H_{d,k} +  \q^H_k \vvv_j  | $ with ${\bar\vvv}_j = [ \vvv^H_j \: 1]^H$ and $ {\bar \q}^H_k  =  [  {\q}^H_k  \: h^H_{d,k}] $. Then, introducing two sets of new variables as $e_{k,j}= p_{k,j}t_{k,j}$ and $S_{k,j} =e_{k,j}   |   {\bar \q}^H_k \bar \vvv_j   |^2$, we can equivalently  transform (P4) into the following problem
\begin{subequations} \label{General:probm101}
\begin{align}\label{eq10}
  \mathop {\mathrm{max} }\limits_{{\tau_{0},\{t_{k,j}\},\{e_{k,j}\}, \{S_{k,j}\},\{ \bar \vvv_{j}\} } } & \sum_{k=1}^{K} \sum_{j=0}^{J}t_{k,j}\log_2\left(1+  \frac{S_{k,j}}{t_{k,j}\sigma^2}   \right) \\
\mathrm{s.t.} ~~~~~&  S_{k,j} \leq e_{k,j}  |   {\bar \q}^H_k \bar \vvv_j   |^2, ~ \forall k, j,\label{P41:eq4011} \\
& \sum_{j=0}^{J}e_{k,j}\leq \eta_kP_{\rm A}   |{\bar \q}^H_k \bar \vvv_0|^2    \tau_0, ~ \forall k, \label{P41:eq401} \\
& \tau_{0}\geq0, ~  t_{k,j}\geq  0,  ~e_{k,j}\geq  0, ~\forall k, j,  \label{P4:eq403} \\
& \eqref{P4:eq402},  |[\bar \vvv_j]_n|=1,  n=1,\cdots, N+1, j =0, \cdots, J. \label{P4:eq:modulus2}  
\end{align}
\end{subequations}
Note that similar to problem \eqref{probm:JO:20}, constraints \eqref{P41:eq4011} have been relaxed to inequalities without loss of optimality. The objective function is convex now while constraints \eqref{P41:eq4011} and \eqref{P41:eq401} are still non-convex, besides the unit-modulus constraints in \eqref{P4:eq:modulus2}.

We next transform \eqref{P41:eq4011} and \eqref{P41:eq401}  into approximate convex constraints respectively by exploiting the SCA technique and proper change of variables. First, we introduce new variables $x_{k,j}$ and $y_{k,j}$, $\forall k,j$. Then, constraints \eqref{P41:eq4011} are equivalent to
\begin{align}
S_{k,j} &\leq e^{ x_{k,j} + y_{k,j} },  \label{P41:eq4011:1}\\
 e^{x_{k,j}} &\leq e_{k,j},  \label{P41:eq4011:2} \\
e^{y_{k,j}}  &\leq |  {\bar \q}^H_k \bar \vvv_j   |^2.  \label{P41:eq4011:3}
\end{align}
Despite the non-convexity of constraints \eqref{P41:eq4011:1} and \eqref{P41:eq4011:3}, the right-hand sides (RHSs) of them, i.e., $e^{ x_{k,j} + y_{k,j} }$  and $|  {\bar \q}^H_k \bar \vvv_j   |^2$, are convex functions with respect to the corresponding variables. This allows us to apply  first-order Taylor expansion based SCA technique to linearize them as convex constraints given by
\begin{align}
S_{k,j} &\leq e^{ \hat x_{k,j} + \hat y_{k,j} }(1+x_{k,j} +y_{k,j} - \hat x_{k,j} - \hat y_{k,j}  ),  \label{SCA:P41:eq4011:1}\\
e^{y_{k,j}}  &\leq  2 \mathrm{Re}\{   \w^H_j\Q_k {\bar \vvv}_j   \} -   \w^H_j\Q_k \w_j,   \label{SCA:P41:eq4011:3}
\end{align}
where  $\hat x_{k,j}$, $\hat y_{k,j}$, and $\w_j$ are the given local points of $x_{k,j}$, $y_{k,j}$, and  $\bar \vvv_j$, respectively.  Second, for constraints \eqref{P41:eq401}, it can be similarly shown as Lemma 1 that
$|{\bar \q}^H_k \bar \vvv_0|^2    \tau_0$ is jointly convex with respect to $\bar \vvv_0$ and $\frac{1}{\tau_0}$. As such, with given local points $ \w_0$ and $t_0$, we obtain the following
lower bound for $|{\bar \q}^H_k \bar \vvv_0|^2    \tau_0 =| {\bar \q}^H_k \bar \vvv_0|^2/(\frac{1}{\tau_0})    $ as
\begin{align}
|{\bar \q}^H_k \bar \vvv_0|^2    \tau_0 \geq 2 \mathrm{Re}\{   \w^H_0\Q_k {\bar\vvv}_0 t_0  \} - \frac{t_0^2\w^H_0\Q_k \w_0}{\tau_0}.
\end{align}
Accordingly, constraint \eqref{P41:eq401} becomes
\begin{align}\label{SCA:EH:convex}
 \sum_{j=0}^{J}e_{k,j} \leq  \eta_kP_{\rm A} \left(2 \mathrm{Re}\{   \w^H_0\Q_k {\bar\vvv}_0 t_0  \} - \frac{t_0^2\w^H_0\Q_k \w_0}{\tau_0}\right),
\end{align}
which is convex constraint now. As a result, problem \eqref{General:probm101} is approximated as
\begin{subequations} \label{SecIII:P4:convex}
\begin{align}\label{eq10}
 \mathop {\mathrm{max} }\limits_{{\tau_{0},\{t_{k,j}\},\{e_{k,j}\}, \{S_{k,j}\},\{\bar \vvv_{j}\} } } & \sum_{k=1}^{K} \sum_{j=0}^{J}t_{k,j}\log_2\left(1+  \frac{S_{k,j}}{t_{k,j}\sigma^2}   \right) \\
\mathrm{s.t.} ~~~~~
&\eqref{P4:eq:modulus2}, \eqref{P4:eq402},  \eqref{P4:eq403},  \eqref{P41:eq4011:2},  \eqref{SCA:P41:eq4011:1},  \eqref{SCA:P41:eq4011:3}, \eqref{SCA:EH:convex}. 
\end{align}
\end{subequations}
By relaxing the unit-modulus constraints in  \eqref{P4:eq:modulus2} to $ |[\bar \vvv_j]_n|\leq 1,  \forall j, n,$ as in Section III-B,    problem \eqref{SecIII:P4:convex} becomes a convex optimization problem and thus can be successively  solved by standard solvers, e.g. CVX, until convergence is achieved. Since the objective value achieved by successively solving  problem \eqref{SecIII:P4:convex}  is non-decreasing over iterations  and the
optimal objective value is bounded from below, the proposed algorithm is guaranteed to converge. After obtaining the converged solution,  we reconstruct the unit-modulus phase-shift vectors by subtracting the phases as \eqref{phase::construction}.

\subsection{Proposed Generic Low-complexity Algorithm}
Although the algorithm proposed in the previous section is  generic, it does not provide sufficient useful insights into the optimal solution of (P4) and the computational complexity is also relatively high as the numbers of constraints and optimization variables scale linearly with $KJ$. Next, by deeply exploiting the special structure of the optimal solution to (P4), we propose a low-complexity algorithm based on the proposed algorithms for (P1) and (P3). To this end, we first provide the following proposition to shed light on how the $K$ devices make use of the $J+1$ phase-shift vectors for UL WIT.

\begin{proposition}\label{General:binary}
Denote the optimal UL phase-shift vectors to (P4) by $\vvv^*_j, j=0,\cdots, J$. Then, (P4) is equivalent to the following problem
\begin{subequations} \label{General:probm10}
\begin{align}\label{eq10}
\text{(P4')}:  \mathop {\mathrm{max} }\limits_{{\tau_{0},\{t_{k,k'}\},\{p_{k,k'}\} } } & \sum_{k=1}^{K} t_{k,k'}\log_2\left(1+  \frac{  p_{k, k'}    |h^H_{d,k} + \q^H_k \vvv^*_{k'}|^2    }{\sigma^2}   \right) \\
\mathrm{s.t.} ~~~~~&  {p_{k,k'}}t_{k,k'}\leq \eta_kP_{\rm A}   |h^H_{d,k} + \q^H_k \vvv^*_{0}|^2    \tau_0,  ~ \forall k, \label{General:eq401} \\
& \tau_{0}+\sum_{k=1}^{K} t_{k,k'}\leq T_{\mathop{\max}},  \label{SecIV:eq402} \\
& \tau_{0}\geq0, ~  t_{k,k'}\geq  0,  ~ p_{k, k'}  \geq  0, ~\forall k, \label{SecIV:eq403}
\end{align}
\end{subequations}
where $k'=  \arg\mathop {\max }\limits_{j \in \{0, \cdots, J\}}    |h^H_{d,k} + \q^H_k \vvv^*_{j}| $.
\end{proposition}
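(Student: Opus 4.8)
The plan is to fix an optimal tuple of UL phase-shift vectors $\vvv^*_j$, $j=0,\cdots,J$, for (P4), so that (P4) collapses to a pure time-and-power allocation problem, and then to show that it is without loss of optimality for every device to put all of its UL time and energy onto the single reflection pattern giving the largest effective UL channel gain. With the $\vvv^*_j$ fixed and with the change of variables $e_{k,j}=p_{k,j}t_{k,j}$ used in \eqref{General:probm101}, set $\gamma_{k,j}\triangleq |h^H_{d,k}+\q^H_k\vvv^*_j|^2$ and, for $t>0$, $g_{k,j}(t,e)\triangleq t\log_2\!\big(1+\frac{e\,\gamma_{k,j}}{t\sigma^2}\big)$, extended by $g_{k,j}(0,0)\triangleq 0$. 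Problem (P4) then becomes: maximize $\sum_{k=1}^{K}\sum_{j=0}^{J}g_{k,j}(t_{k,j},e_{k,j})$ over $\tau_0,\{t_{k,j}\},\{e_{k,j}\}\ge 0$ subject to the per-device energy-causality constraints $\sum_{j=0}^{J}e_{k,j}\le\eta_k P_{\rm A}\gamma_{k,0}\tau_0$ and the total-time constraint $\tau_0+\sum_{k,j}t_{k,j}\le T_{\max}$. The structural fact I will exploit is that device $k$'s time variables couple to the other devices only through $\sum_j t_{k,j}$, and its energy variables appear only through $\sum_j e_{k,j}$; hence all constraints are invariant under any redistribution of $\{(t_{k,j},e_{k,j})\}_j$ that keeps these two partial sums fixed.

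The heart of the proof is a two-step inequality applied to each device $k$ in isolation. Let $k'\triangleq\arg\max_{j\in\{0,\cdots,J\}}|h^H_{d,k}+\q^H_k\vvv^*_j|$. Since $\log_2(\cdot)$ is increasing, $g_{k,j}(t,e)\le g_{k,k'}(t,e)$ for all $(t,e)\ge0$, so $\sum_j g_{k,j}(t_{k,j},e_{k,j})\le\sum_j g_{k,k'}(t_{k,j},e_{k,j})$. Next, $g_{k,k'}$ is the perspective (in $(t,e)$) of the concave function $x\mapsto\log_2(1+\gamma_{k,k'}x/\sigma^2)$, hence jointly concave (the perspective operation preserves concavity \cite{Boyd}), and its extension by $0$ on $\{t=0\}$ stays concave; moreover $g_{k,k'}$ is positively homogeneous of degree one, $g_{k,k'}(\lambda t,\lambda e)=\lambda g_{k,k'}(t,e)$. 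A concave, degree-one-homogeneous function is superadditive, since $g_{k,k'}(t_1,e_1)+g_{k,k'}(t_2,e_2)=2\big(\tfrac12 g_{k,k'}(t_1,e_1)+\tfrac12 g_{k,k'}(t_2,e_2)\big)\le 2\,g_{k,k'}\big(\tfrac{t_1+t_2}{2},\tfrac{e_1+e_2}{2}\big)=g_{k,k'}(t_1+t_2,e_1+e_2)$; iterating over $j$ gives $\sum_j g_{k,k'}(t_{k,j},e_{k,j})\le g_{k,k'}\big(\sum_j t_{k,j},\sum_j e_{k,j}\big)$. Chaining the two steps yields $\sum_j g_{k,j}(t_{k,j},e_{k,j})\le g_{k,k'}\big(\sum_j t_{k,j},\sum_j e_{k,j}\big)$ for every feasible allocation of device $k$.

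To finish, start from any optimal solution of the reduced problem and, for each $k$, replace $(t_{k,k'},e_{k,k'})$ by $\big(\sum_j t_{k,j},\sum_j e_{k,j}\big)$ while zeroing $t_{k,j},e_{k,j}$ for $j\ne k'$. By the invariance noted above this keeps all constraints satisfied, and by the chained inequality the objective does not decrease; hence (P4) admits an optimal solution in which device $k$ transmits only over $\vvv^*_{k'}$, and, recovering $p_{k,k'}=e_{k,k'}/t_{k,k'}$ (any nonnegative value when $t_{k,k'}=0$), this solution is precisely a feasible point of (P4') attaining the optimal value of (P4). Conversely, any feasible point of (P4') embeds into (P4) by assigning zero time/power to all unused $(k,j)$ pairs and keeping the same $\vvv^*_j$, so $\mathrm{opt}(\mathrm{P4}')\le\mathrm{opt}(\mathrm{P4})$. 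Therefore $\mathrm{opt}(\mathrm{P4}')=\mathrm{opt}(\mathrm{P4})$, which is the asserted equivalence; note that optimality of the $\vvv^*_j$ is used only for this reverse inequality, as the concentration step is valid for any fixed phase-shift vectors.

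The step I expect to be the main obstacle is the function-analytic core: correctly identifying the per-slot throughput $g_{k,j}(t,e)$ as a superadditive function of allocated time and energy via its perspective structure and degree-one homogeneity, and carefully handling the $t=0$ boundary so that the extended function is still concave and contributes nothing. The remainder is bookkeeping, resting on the fact that the total-time constraint sees only $\sum_j t_{k,j}$ and each energy-causality constraint sees only $\sum_j e_{k,j}$, so that the per-device concentration never disturbs feasibility.
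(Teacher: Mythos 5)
Your proof is correct and rests on the same core fact as the paper's Appendix B, namely the joint concavity of $t\log_2(1+\tfrac{e\gamma}{t\sigma^2})$ in $(t,e)$ (there invoked as ``concavity of $x\log_2(1+y/x)$''), used to show that each device should pool all of its UL time and energy onto the phase-shift vector with the largest effective gain. The only difference is presentational: the paper argues by contradiction, merging two slots at a time and obtaining a strict improvement when the gains differ, whereas you argue directly via superadditivity of the degree-one-homogeneous perspective function, which handles all $J+1$ slots at once and ties between gains without any special casing.
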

\begin{proof}
 Please refer to  Appendix B.
\end{proof}
Proposition \ref{General:binary} provides an important insight into the optimal solution of (P4): the optimal association between the UL phase-shift vectors ($\vvv^*_{j}$'s) and the devices is  \emph{binary}. Specifically,  each device only needs to employ one IRS phase-shift vector in UL WIT and exhausts all of its harvested energy during the corresponding period, as indicated by constraints \eqref{General:eq401}.
 However,  since $J$ may be smaller than $K$, multiple devices may share the same IRS phase-shift vector. For example, when  $J=2$ and $K=3$, Proposition 2 implies that one of the three devices will be scheduled under a dedicated IRS phase-shift vector and the remaining two devices will share the other one. {Based on this conclusion, it is not difficult to conclude that when $J=K$ in (P4),  each device will be assigned with  a dedicated phase-shift vector that exclusively maximizes its own effective channel power gain, thus rendering (P4)  to be simplified to (P1)  in Section II.
 Furthermore, it also implies that exploiting dynamic IRS beamforming in UL WIT under $J>K$ will provide no system throughput improvement. In other words, at most $J=K$ IRS phase-shift vectors are sufficient for the considered IRS-aided WPCN and this is also why (P1) is able to provide the performance upper bound for any arbitrary $J$.} Thus, we only need to consider the case of (P4) with $J\leq K$ in the following.

Nevertheless, the optimal UL phase-shift vectors $\vvv^*_j$'s remains unknown yet and  (P4') is also a non-convex optimization problem.  Motivated by Proposition \ref{General:binary}, we next propose an efficient algorithm to solve (P4). Specifically,  Proposition \ref{General:binary} indicates that all the $K$ devices need to be partitioned into $J$ groups where each device belongs exclusively to one group and the devices in the same group employ the common IRS phase-shift vector for UL WIT, as shown in Fig. 2. Unfortunately, this is a combinatorial optimization task and the optimal partition strategy requires an exhaustive search for all the possible cases, i.e., $J^K$, which leads to an exponential computational complexity that is prohibitive in practice.

Inspired by the user-adaptive dynamic IRS beamforming and static  beamforming described in Section II-B, we propose an efficient hybrid scheme. Specifically, we first sort all the devices in descending order according to their respective effective channel power gains suppose that each can be assigned with a dedicated phase-shift vector as in the user-adaptive case in Section III-A, i.e., $   \gamma_{\Psi(k)} \triangleq  |h^H_{d,{\Psi(k)}} + \q^H_{\Psi(k)} \vvv^\star_{\Psi(k)}|^2$  and $[\vvv^\star_{\Psi(k)} ]_n= e^{\jmath( \arg\{h^H_{d,{\Psi(k)} }\} - \arg\{ [\q^H_{\Psi(k)} ]_n\}  )}, \forall n$, where $\Psi(k)$ denotes the order of device $k$. Second, each of the first $J$ devices, i.e., $\Psi(k)=1,\cdots,J$, is assigned with a dedicated phase-shift vector for UL WIT and all the rest $K-J$ devices are assumed to employ the DL phase-shift vector for UL WIT.
Based on this scheme,  (P4') is transformed into the following optimization problem
\begin{subequations} \label{General:probm10:2}
\begin{align}\label{eq10:2}
\text{(P5)}:  \mathop {\mathrm{max} }\limits_{{\tau_{0},\{t_{k}\},\{p_{k}\}, \vvv_0 } } & \sum_{\Psi(k)=1}^{J} t_{\Psi(k)}\log_2\left(1+  \frac{  p_{\Psi(k)}  \gamma_{\Psi(k)}    }{\sigma^2}   \right)  \nonumber \\
&+  \sum_{\Psi(k)=J+1}^{K} t_{\Psi(k)}\log_2\left(1+  \frac{  p_{\Psi(k)}    |h^H_{d,\Psi(k)} + \q^H_{\Psi(k)} \vvv_{0}|^2    }{\sigma^2}   \right)   \\
\mathrm{s.t.} ~~~~~&  {p_{\Psi(k)}}t_{\Psi(k)}\leq \eta_{\Psi(k)}P_{\rm A}   |h^H_{d,\Psi(k)} + \q^H_{\Psi(k)} \vvv_{0}|^2    \tau_0,  ~ \forall k, \label{General:eq401:2} \\
& |[\vvv_0]_n|=1,  n=1,\cdots, N,\label{P1:eq:modulus2:2}\\
& \tau_{0}+\sum_{\Psi(k)=1}^{K} t_{\Psi(k)}\leq T_{\mathop{\max}},  \label{SecIV:eq402:2} \\
& \tau_{0}\geq0, ~  t_{\Psi(k)}\geq  0,  ~ p_{\Psi(k)}  \geq  0, ~\forall k.  \label{SecIV:eq403:2}
\end{align}
\end{subequations}
Note that objective function of (P5) is a combination of those in problem \eqref{SecIII:probm10} and problem \eqref{SecIII:P3} with the same set of constraints. In particular,  when  $J=K$ or $J=0$/$J=1$, (P5) is exactly the same as (P3) or (P1)/(P2). By substituting \eqref{General:eq401:2} into the objective function to eliminate ${p_{\Psi(k)}}$'s and introducing slack variables $S_{\Psi(k)}$ and $U_{\Psi(k)}$, we can transform (P5) to
\begin{subequations} \label{General:probm10:3}
\begin{align}\label{eq10:3}
\text{(P5)}:  \mathop {\mathrm{max} }\limits_{  \overset{ {\tau_{0},\{t_{\Psi(k)}\}, {\bar \vvv}_0 }}{  S_{\Psi(k)}, U_{\Psi(k)} }    } & \sum_{\Psi(k)=1}^{J} t_{\Psi(k)}\log_2\left(1+    \frac{  S_{\Psi(k)}  }  {t_{\Psi(k)} \sigma^2}   \right)  +  \sum_{\Psi(k)=J+1}^{K} t_{\Psi(k)}\log_2\left(1+    \frac{  U_{\Psi(k)} }  {t_{\Psi(k)} \sigma^2}   \right)   \\
\mathrm{s.t.} ~~~~~& S_{\Psi(k)}  \leq \eta_{\Psi(k)}  P_{\rm A} \gamma_{\Psi(k)} |   {\bar \q}^H_{\Psi(k)} \bar \vvv_0   |^2\tau_0, ~ \forall k, j,\label{P41:eq4011:3:1} \\
& U_{\Psi(k)}  \leq \eta_{\Psi(k)}  P_{\rm A} |   {\bar \q}^H_{\Psi(k)} \bar \vvv_0   |^4\tau_0, ~ \forall k, j,\label{P41:eq4011:3:2} \\
& \tau_{0}\geq0, ~  t_{\Psi(k)}\geq  0, ~\forall k,  |[\bar\vvv_0]_n|=1,  n=1,\cdots, N+1,\eqref{SecIV:eq402:2},  \label{SecIV:eq403:3}
\end{align}
\end{subequations}
where  ${\bar\vvv}_0 = [ \vvv^H_0 \: 1]^H$ and $ {\bar \q}^H_{\Psi(k)}  =  [  {\q}^H_{\Psi(k)}  \: h^H_{d,{\Psi(k)}}]$. Note that non-convex constraints \eqref{P41:eq4011:3:1} and \eqref{P41:eq4011:3:2} have the same form as constraints \eqref{P41:eq401} and \eqref{SecIII:eqJO201}, respectively, thus they can be similarly transformed into approximate convex constraints by exploiting the aforementioned SCA based techniques. Then, by relaxing the unit-modulus constraints in \eqref{SecIV:eq403:3}, we can successively solve the resulting convex optimization problem until convergence is achieved and then  reconstruct the unit-modulus phase-shift vectors by subtracting the phases as \eqref{phase::construction} in Section II-B.

\subsection{Complexity Analysis}
The computational complexities of the above two algorithms are analyzed as follows. Specifically, the general algorithm needs to optimize more optimization variables and its complexity is given by $\mathcal{O}\left( (N+3K)^{0.5}(N+5K)^3(J+1)^{3.5}\right)$ based on the analytical results in \cite{wang2014outage}, whereas the low-complexity algorithm mainly needs to solve (P5) with only one IRS phase-shift vector, which results in a complexity given by $\mathcal{O}( (N+2.5K)^{0.5}(N+3K)^3 )$. As such, the computational complexity is reduced by about $(J+1)^{3.5}$ times, which is significant especially when $J$ is large. This is attributed to the idea drawn from Proposition \ref{General:binary}.

 \begin{figure}[t]
\centering
\includegraphics[width=0.54\textwidth]{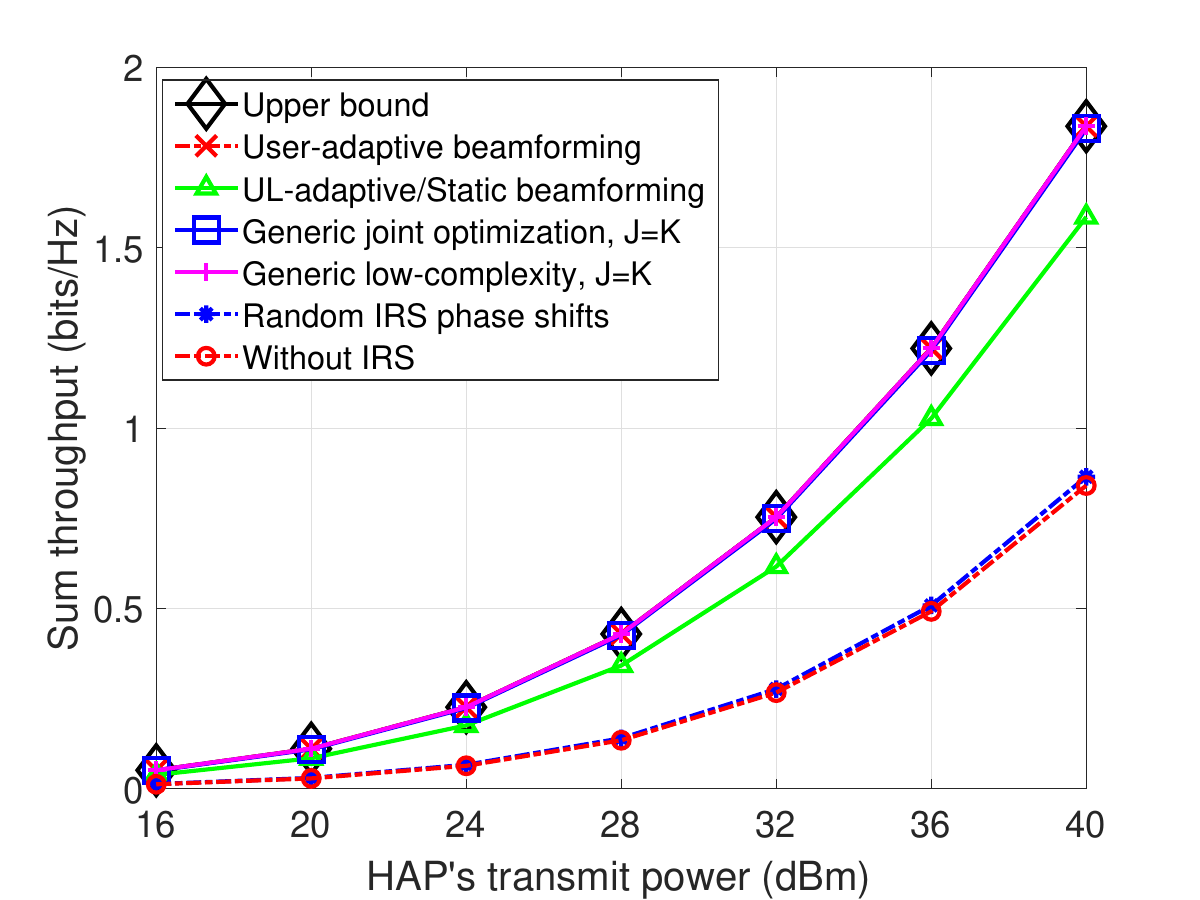} 
\caption{Sum throughput versus the HAP transmit power.  } \label{simulation:Rate:vs:power} \vspace{-0.5cm}  
\end{figure}

\section{Simulation Results}
In this section, numerical results are provided to validate the effectiveness of the proposed algorithms and to draw useful insights into IRS-aided WPCNs.  The HAP and IRS are located at (0, 0, 0) meter (m) and (10, 0, 4) m, respectively, and the devices are randomly and uniformly distributed within a radius of  $1.5$~m centered at (10, 0, 0) m. The pathloss exponents of both the HAP-IRS and IRS-device channels are set to $2.2$, while those of the HAP-device channels are set to $3.4$. Furthermore, Rayleigh fading is adopted as the small-scale fading for all channels. The signal attenuation at a reference distance of $1$~m is set as $30$~dB. Unless otherwise stated, other  system parameters are set as follows: $\eta_k=0.8,~\forall k$, $\sigma^2=-80$~dBm, $T_{\max}=1~s$, $N=50$, $K=10$, and $P_A=40$~dBm.

 \vspace{-0.3cm}
\subsection{Impact of HAP's Transmit Power}
In Fig. \ref{simulation:Rate:vs:power}, we plot the system sum throughput versus the transmit power at the HAP.  For comparison, we consider the following schemes: 1) {\bf Upper bound}: By relaxing ${\rm{rank}}(\bm{W}_0)=1$, we solve (P1') Section III-A successively until convergence, which serves as a performance upper bound;  2) {\bf User-adaptive beamforming}:  Gaussian randomization is applied to obtain a rank-one $\bm{W}_0$ to (P1') based on the solution of the scheme in 1);  3) {\bf UL-adaptive/Static IRS beamforming}: the approach in Section III-B;
4) {\bf Generic joint optimization algorithm}: the approach in Section IV-B;  5) {\bf Generic low-complexity algorithm}: the approach in Section IV-C; 6) {\bf Random IRS phase shifts}: time allocation is optimized with random phase shifts; and 7) {\bf Without IRS}.

From Fig.  \ref{simulation:Rate:vs:power}, it is observed that our proposed designs can significantly improve the sum throughput as compared to the cases with random phase shifts at the IRS and without IRS, with the performance gap increasing as $P_{\rm A}$ increases. This is expected since the efficiencies of both DL WPT and UL WIT can be boosted by exploiting the smart reflection of IRS. In particular, the scheme with random IRS phase shifts only achieves a negligible throughput gain over the scheme without IRS.  {Furthermore, it is also observed that our proposed generic designs in the case of $J=K$ can achieve the same performance as the special case with user-adaptive IRS beamforming and near-optimal performance as compared to the upper bound. This demonstrates the generality of proposed dynamic IRS beamforming optimization framework and also the effectiveness of the proposed algorithms.}  Finally,  compared to the user-adaptive IRS beamforming scheme, the static IRS beamforming scheme suffers performance loss due to its use of the same phase-shift vector for both DL WPT and UL WIT of multiple devices, which is less flexible in channel reconfiguration.

In Figs. \ref{PA:versus:tau0} and \ref{PA:versus:energy}, we show the effect of $P_{\rm A}$ on the optimized DL WPT duration and system energy consumption at the HAP, respectively. It is noted that although increasing $P_{\rm A}$ can reduce the DL WPT duration $\tau_0$ (shown in Fig. \ref{PA:versus:tau0}), the total transmit energy consumption at the HAP given by $E_{\rm HAP}=P_{\rm A}\tau_0$ is increased significantly (shown in Fig. \ref{PA:versus:energy}), especially for large  $P_{\rm A}$. This suggests that increasing $P_{\rm A}$ to improve the sum throughput improvement in Fig. \ref{simulation:Rate:vs:power} is in fact at the cost of excessive transmit  energy consumption at the HAP, which, however, may not be a  green and sustainable approach to support the throughput growth of future WPCNs, especially considering the practical limitation on $P_{\rm A}$.

\begin{figure*}[t]\hspace{-1cm}
~~~~~\subfigure[DL WPT duration versus $P_{\rm A}$.]{\includegraphics[width=0.54\textwidth]{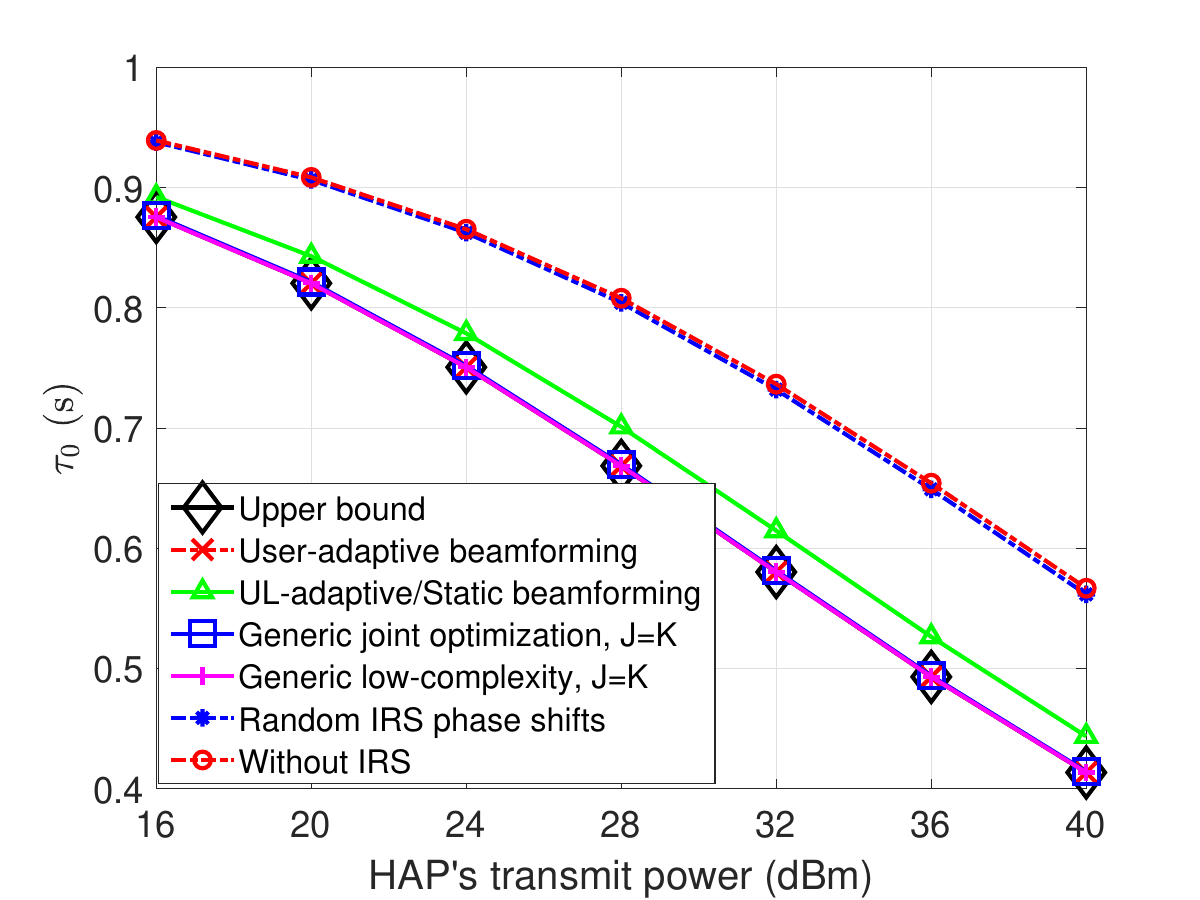}\label{PA:versus:tau0}}
\hspace{-0.35in}
\subfigure[Total energy consumed at HAP versus $P_{\rm A}$.]{\includegraphics[width=0.54\textwidth]{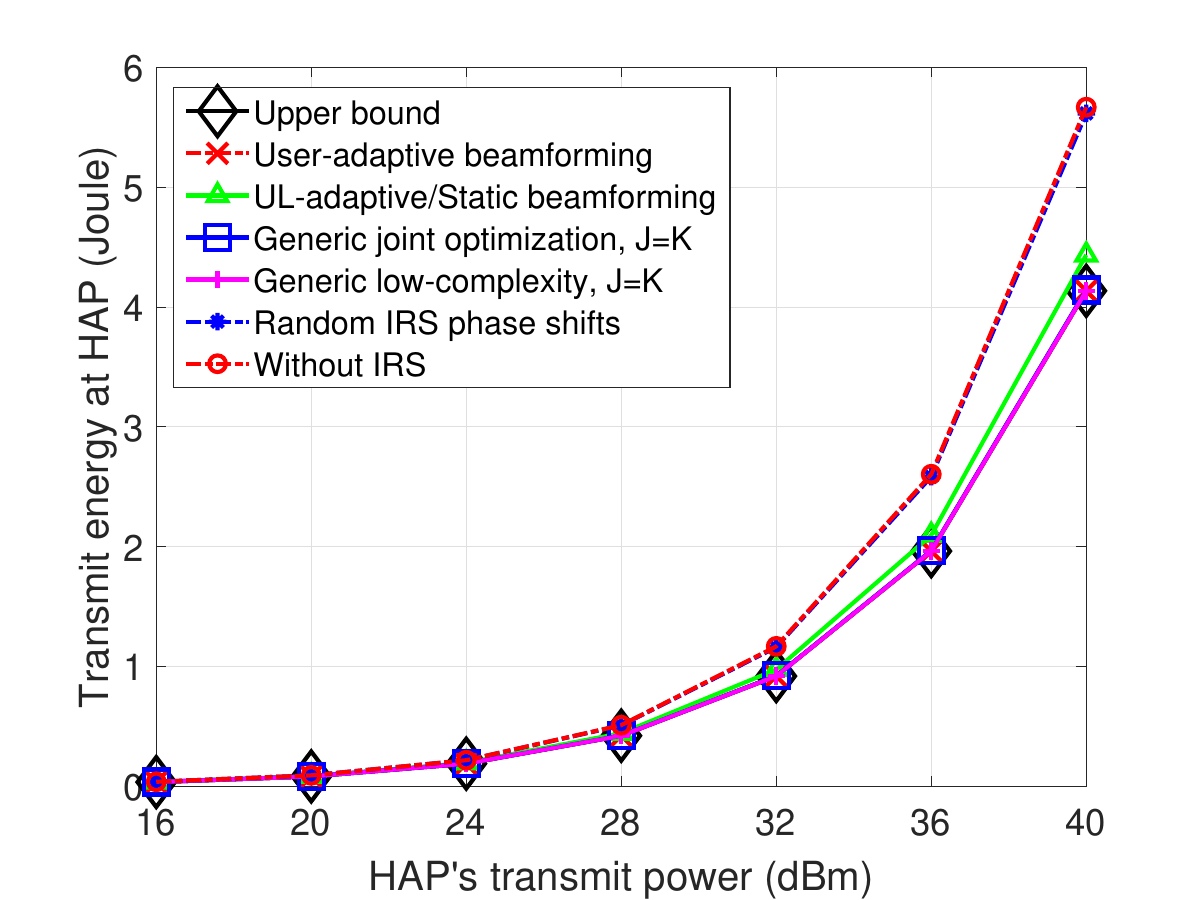}\label{PA:versus:energy}}
\caption{Impact of transmit power at the HAP on WPCNs.}\label{Effect:of:PA}  \vspace{-0.5cm}
\end{figure*}

 \vspace{-0.3cm}
\subsection{Impact of Number of IRS Elements}
 \begin{figure}[t]
\centering
\includegraphics[width=0.54\textwidth]{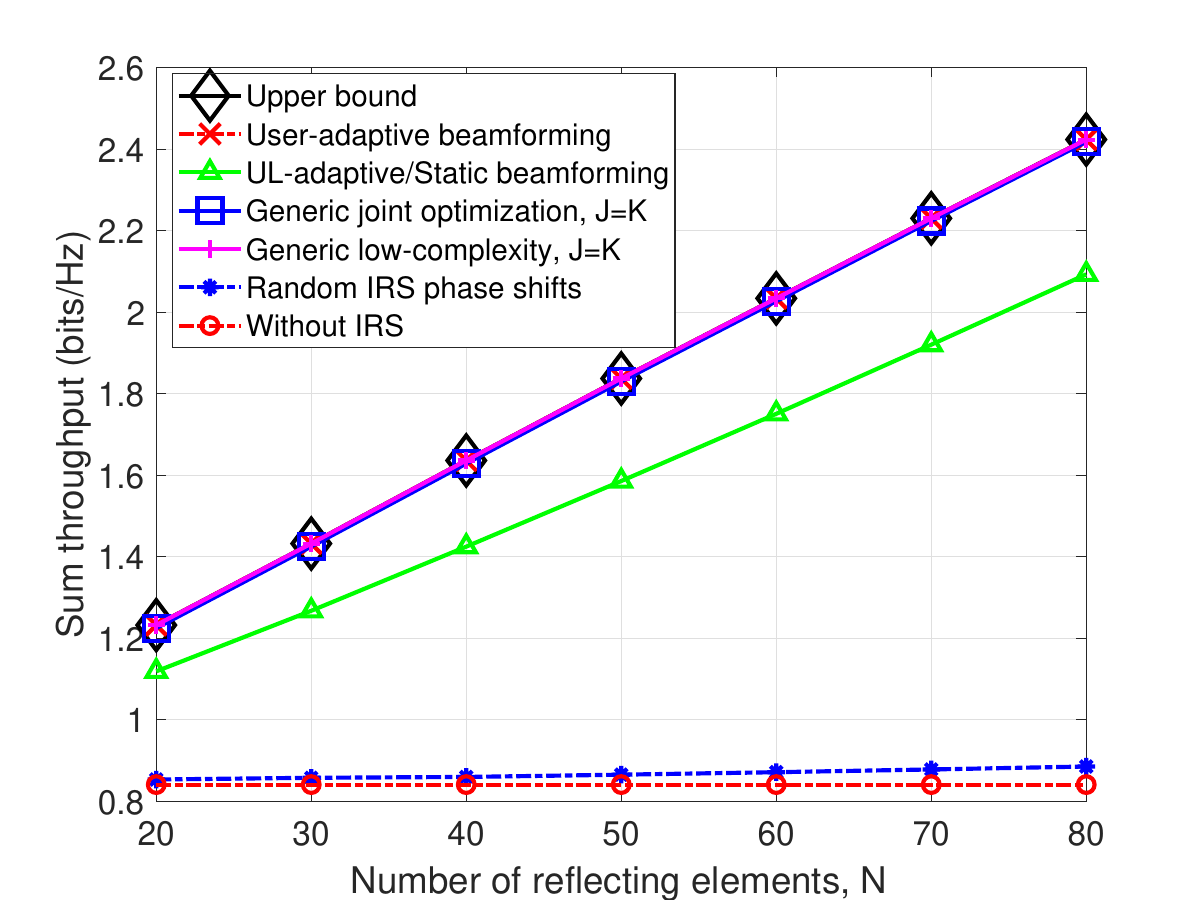} 
\caption{Sum throughput versus the number of IRS elements.  } \label{simulation:Rate:vs:N} \vspace{-0.5cm}  
\end{figure}

In Fig. \ref{simulation:Rate:vs:N}, we plot the system sum throughput versus the number of IRS elements. First, it is observed that the sum throughput of IRS-aided WPCNs achieved by our proposed designs increases as $N$ becomes larger and our proposed low-complexity algorithm is able to achieve almost the same performance as the general algorithm for a wide range of $N$. Moreover, the sum throughput gap between the user-adaptive and static IRS beamforming schemes becomes large as $N$ increases. This is expected since employing the same phase-shift vector in both DL and UL prevents the IRS unleashing its full potential beamforming gain over time, which becomes more pronounced for large $N$.    
This demonstrates the necessity of well-optimized IRS phase shifts for WPCNs with large IRSs.

However, it is worth pointing out that different from the approach of increasing  $P_{\rm A}$ in Section V-A,  increasing $N$ not only significantly improves the system throughput, but also reduces the transmit energy consumption at the HAP, regardless of employing different dynamic beamforming schemes with our proposed designs. This can be observed explicitly in Fig.   \ref{N:versus:tau0} where  we plot the optimized DL WPT duration $\tau_0$  versus $N$ under different schemes. Since  $P_{\rm A}$ is a fixed value here, a decreased DL WPT duration $\tau_0$ suggests a lower transmit energy consumption at the HAP, i.e., $E_{\rm HAP}=P_{\rm A}\tau_0$.  Meanwhile, reducing the DL WPT duration $\tau_0$ also allows wireless-powered devices to have more available time for UL WIT, which helps increase the system throughput as well.  Moreover, despite the decrease of $\tau_0$, one can observe from Fig. \ref{N:versus:energy} that the users' harvested energy, i.e., $E^h_k =\eta_kP_{\rm A}|h^H_{d,k} +   \bm{q}_k^H \vvv_0|^2\tau_0$,  even increases as $N$ increases. This is solely attributed to the deployment of IRS for improving the WPT efficiency in DL via strengthening the effective channel power gain $|h^H_{d,k} +   \bm{q}_k^H \vvv_0|^2$. All the above discussions indicate that incorporating IRS into WPCNs leads to a highly spectral and energy efficient architecture.

\begin{figure*}[t]\hspace{-1cm}
~~~~~\subfigure[DL WPT duration versus $N$.]{\includegraphics[width=0.54\textwidth]{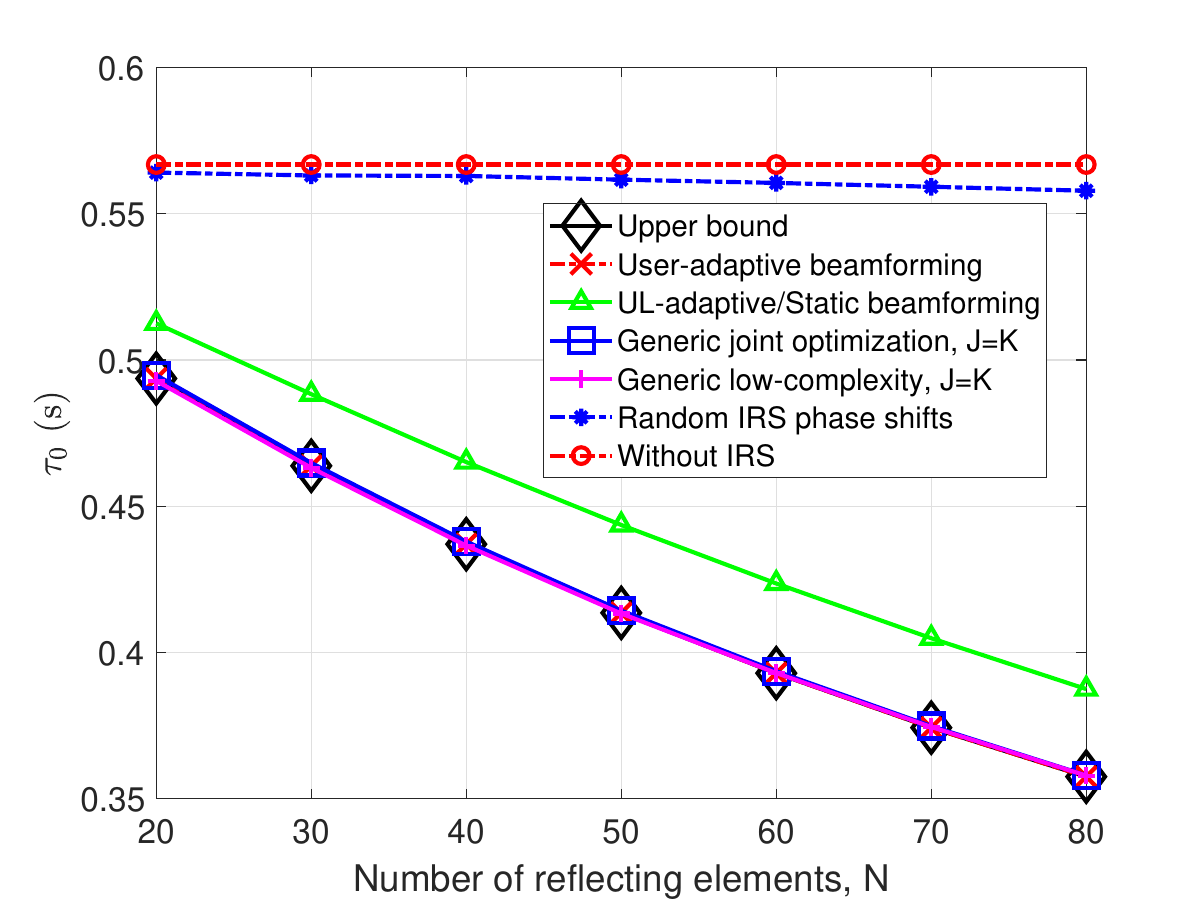}\label{N:versus:tau0}}
\hspace{-0.35in}
\subfigure[Energy harvested by devices versus $N$.]{\includegraphics[width=0.54\textwidth]{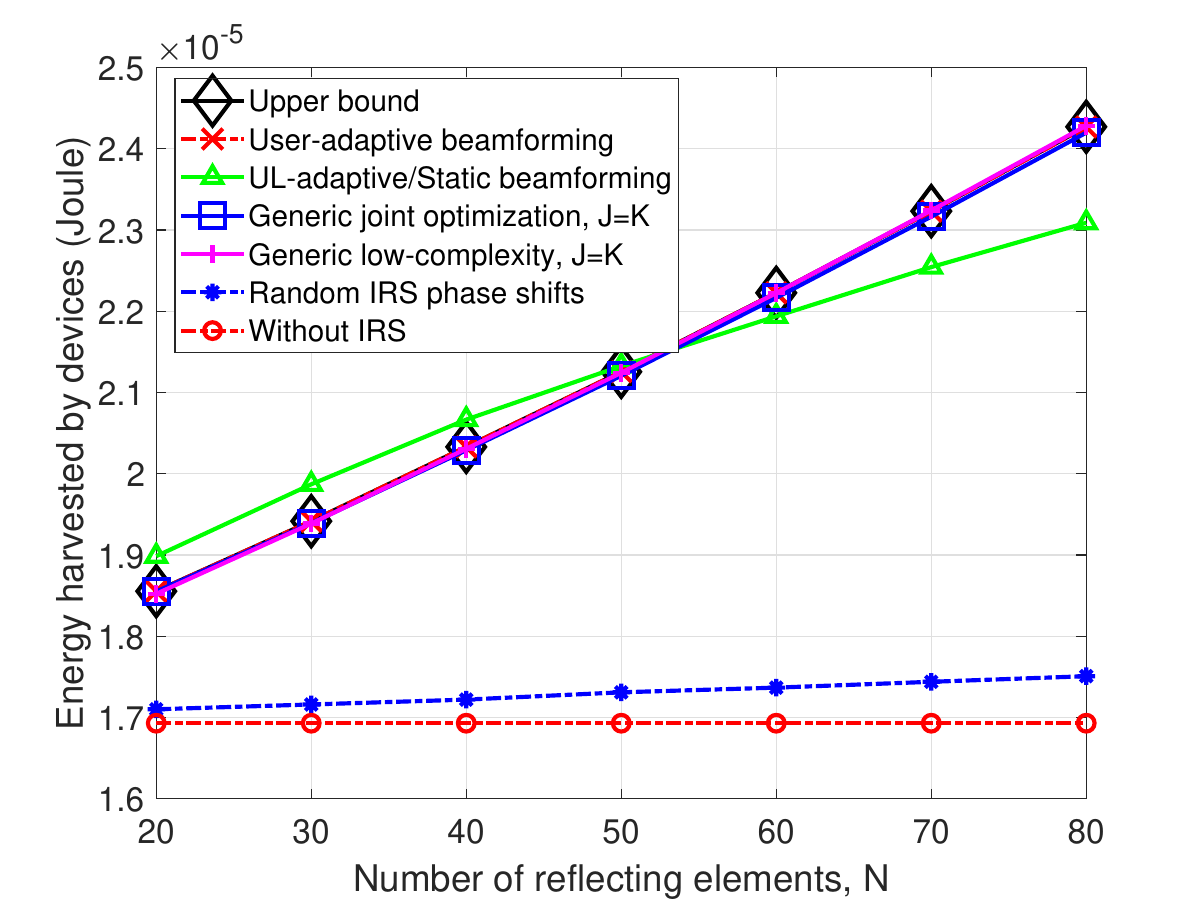}\label{N:versus:energy}}
\caption{Impact of number of IRS elements on WPCNs. }  \vspace{-0.5cm}
\end{figure*}

{\subsection{Impact of Dynamic IRS Beamforming}
In Fig. \ref{dynamicIRS}, we study the impact of dynamic IRS beamforming on WPCNs, by plotting the sum throughput and the DL WPT duration versus the number of IRS phase-shift vectors available, i.e., $J$, respectively. It is first observed from Fig. \ref{rate:versus:J} that for both cases with $K=5$ and $K=10$, the throughput achieved by our proposed algorithm increases as $J$ increases when $J\leq K$, which demonstrates that exploiting dynamic IRS beamforming is indeed beneficial for the throughput improvement of WPCNs. Furthermore, one can observe that the proposed low-complexity algorithm is able to achieve similar throughput as the joint optimization approach, which thus it a  more appealing for practical systems, considering its significantly reduced computational complexity as analyzed in Section IV-D. In particular,  when $J$ is sufficiently large, e.g., $J>8$ for $K=10$, our proposed low-complexity algorithm even achieves slightly higher throughput than the joint optimization approach. This is mainly because the joint optimization approach does not capture the  fundamental insights provided in Proposition \ref{General:binary}.   }

\begin{figure*}[t]\hspace{-1cm}
~~~~~\subfigure[Sum throughput versus $J$.]{\includegraphics[width=0.54\textwidth]{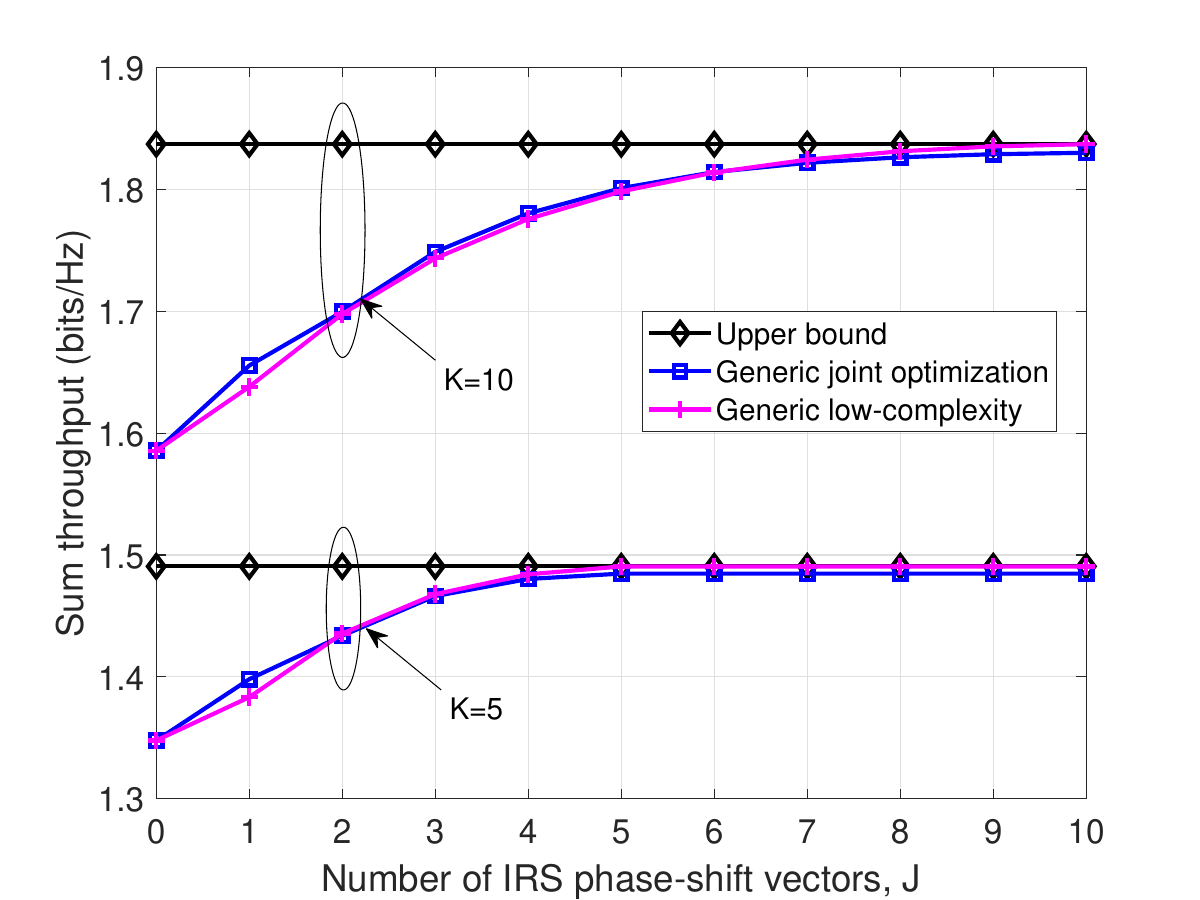}\label{rate:versus:J}}
\hspace{-0.35in}
\subfigure[DL WPT duration versus $J$.]{\includegraphics[width=0.54\textwidth]{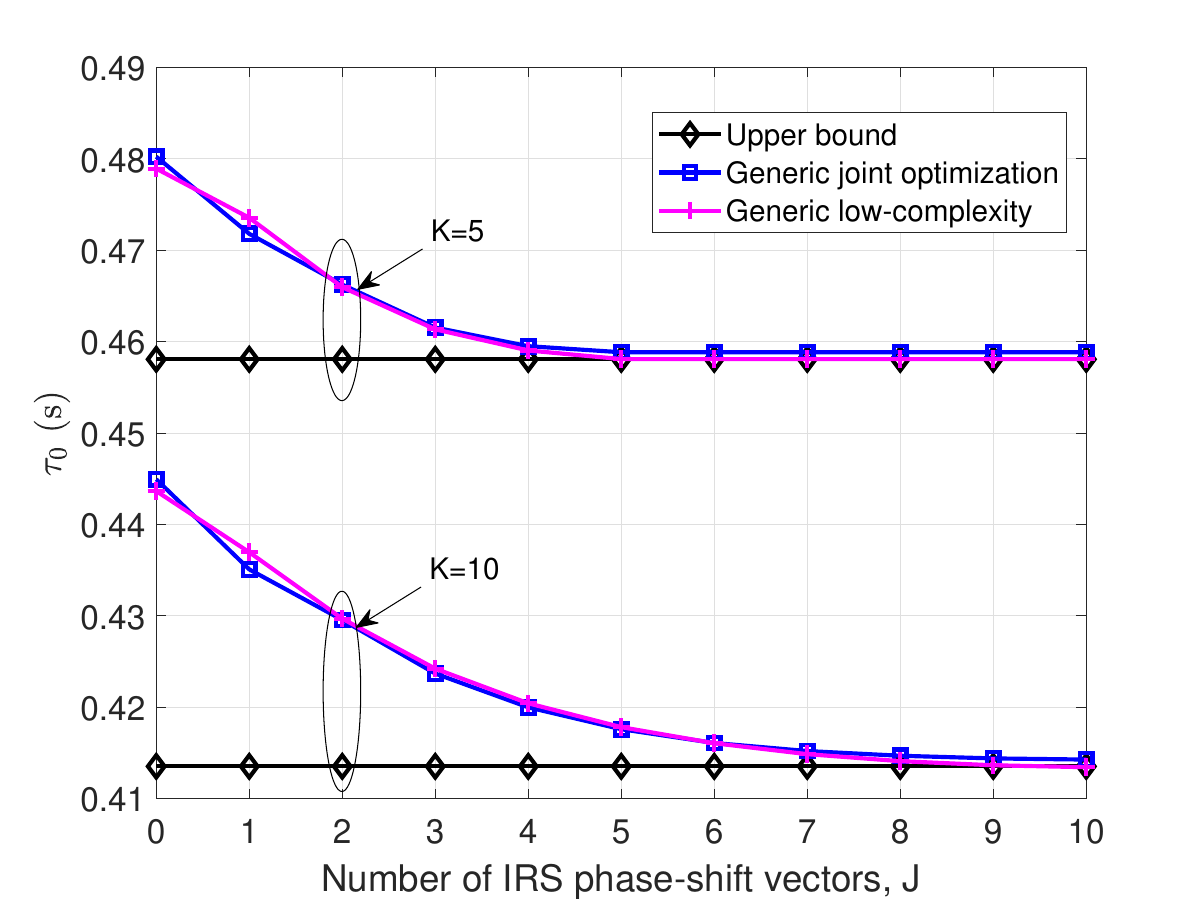}\label{tau0:versus:J}}
\caption{Impact of dynamic IRS beamforming on WPCNs.}\label{dynamicIRS}  \vspace{-0.5cm}
\end{figure*}

{  However, one can observe that as $J$ increases,  the throughput gain obtained by dynamic IRS beamforming becomes gradually saturated. In particular, for $K=10$ ($K=5$), employing a total number of $J=5$ ($J=3$) phase-shift vectors is almost able to achieve the maximum throughput  for IRS-aided WPCNs and further increasing $J$ only brings marginal performance gain. Note that the number of phase-shift coefficients to be sent from the HAP to the IRS controller is given by $JN$, which linearly increases with $J$. This thus suggests a fundamental performance-cost tradeoff in exploiting the dynamic IRS beamforming, which needs to be properly compromised especially for practically large $N$.  
Finally,  for $K=5$, we observe that the throughput achieved by the joint optimization algorithm with $J>5$ remains constant,  which is consistent with the discussion following Proposition \ref{General:binary} in Section IV-C that at most $J=K$ phase-shift vectors are sufficient to maximize the throughput of IRS-aided WPCNs. }

{  From Fig. \ref{tau0:versus:J}, it is observed that as $J$ increases, the optimized DL WPT duration decreases, which means that the total system energy consumption at the HAP can also be accordingly reduced. This further demonstrates the usefulness of dynamic IRS beamforming, besides its capability of improving the sum throughput of IRS-aided WPCNs.  }

 \begin{figure}[t]
\centering
\includegraphics[width=0.54\textwidth]{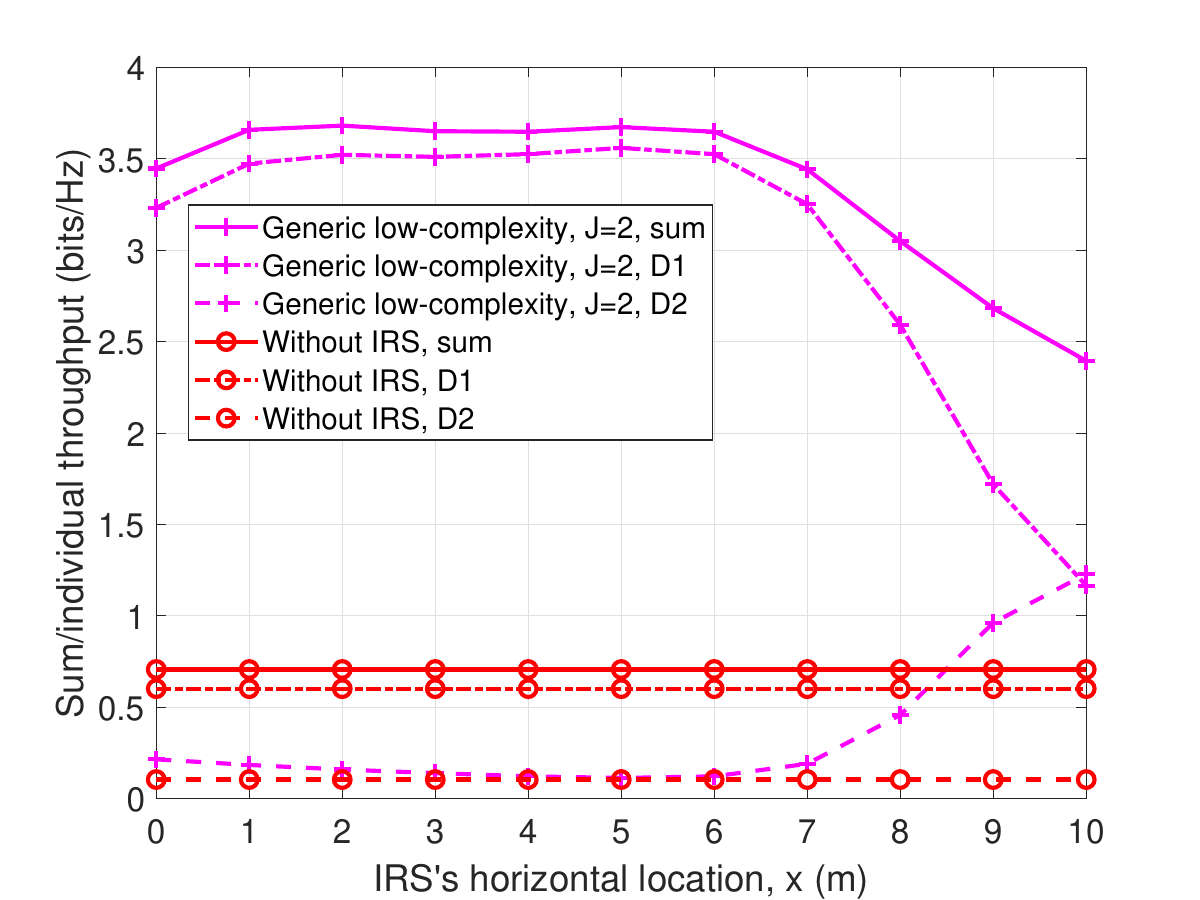} 
\caption{System sum/invidual user throughput versus IRS's horizontal location.  } \label{simulation:Rate:vs:IRSdeployment} \vspace{-0.5cm}  
\end{figure}

\subsection{Impact of IRS Deployment on Doubly-Near-far Problem}
One fundamental problem residing in a WPCN is the well-known ``doubly-near-far'' phenomenon \cite{ju14_throughput} where a device that is far away from the HAP harvests less energy in the DL WPT but consumes more to transmit information in the UL WIT than that of a device nearer to the HAP. As a result, the throughput of a far device may only have significantly lower throughput than a nearby device, leading to a severe user unfair issue in WPCNs. Fortunately, it can be well alleviated by properly deploying the IRS. To illustrate this, we show in Fig. \ref{simulation:Rate:vs:IRSdeployment} the sum throughput and devices' throughput versus the IRS's horizontal location, i.e., ($x$, 0, 0) m, by considering a WPCN with two devices, namely D1 and D2, who are located in (7, 0, 0) m and (10, 0, 0) m, respectively.  It is observed from  Fig. \ref{simulation:Rate:vs:IRSdeployment} that without IRS, the throughput of D1 outperforms that of D2 significantly as expected, whereas by deploying the IRS around the far device (D2), the throughputs of the two devices can be well compromised and both of them are much higher that those in the case without IRS. This is because the IRS is able to effectively compensate more path loss for the device far from the HAP as compared to the nearby device, which helps resolve the user unfairness issue incurred by ``doubly-near-far'' phenomenon.

\section{Conclusions}
In this paper, we aimed at a novel dynamic IRS beamforming framework to maximize the system sum throughput by jointly optimizing the IRS phase shifts and resource allocation  for DL WPT and UL WIT.  Specifically, we first studied three special cases of dynamic IRS beamforming and established their fundamental relationships. In particular, the UL-adaptive IRS beamforming and static IRS beamforming schemes were unveiled to achieve the same performance whereas the latter involves a smaller number of optimization variables with less signalling overhead.  Furthermore, to provide high flexibility in balancing between the performance gain of dynamic IRS beamforming and its resulting signalling overhead as well as computational complexity, we propose two algorithms to address the general optimization problem with any  given number of IRS phase-shift vectors.  Numerical results demonstrated the effectiveness of the proposed designs with IRS over various benchmark schemes.    Moreover,  it was found  that exploiting the large-size IRS with dynamic beamforming not only significantly  improves the system throughput but also effectively reduces the transmit energy consumption at the HAP at the same time, thus rendering IRS-aided WPCN a promising architecture with high spectral and energy efficiency. Finally, using partially dynamic IRS beamforming with a limited number of phase-shifts vectors can be a practically appealing approach considering the performance-cost tradeoff, especially for WPCNs with practically large IRSs.

\appendices
\section*{Appendix A:  Proof of Proposition 1}
To prove Proposition 1, we only need to prove $R^*_{\rm User-adp} \geq R^*_{\rm UL-adp}$ and $R^*_{\rm UL-adp}=R^*_{\rm Static}$, respectively.
Note that (P2) is a special case of (P1) with $\vvv_k=\vvv_m, \forall k\neq m, k \geq 1, m\geq 1$. As such, the optimal solution to (P2) is also a feasible solution to (P1), which yields $R^*_{\rm User-adp} \geq R^*_{\rm UL-adp}$.

We next prove $R^*_{\rm UL-adp}=R^*_{\rm Static}$. First, it can be readily shown that in the optimal solution to (P2),  constraint (5b) is met with equality. Then, the objective function of (P2) can be written as
\begin{align}\label{Appdx:P2obj}
\sum_{k=1}^{K} \tau_k \log_2\left(1+\frac{\eta_kP_{\rm A} \tau_0|h^H_{d,k} + \q^H_k \vvv_0|^2  |h^H_{d,k} + \q^H_k \vvv_1|^2}{\sigma^2\tau_k}\right).
\end{align}
The key to proving $R^*_{\rm UL-adp}=R^*_{\rm Static}$  lies in splitting the upper bound of the objective function in (P2) into two independent terms, which are the functions of $\vvv_0$ and $\vvv_1$, respectively.  To this end, we  provide the following lemma to facilitate the proof, which can be obtained with simple algebraic operations.
\begin{lemma}\label{appdx:lemma2}
For any arbitrary numbers $a\geq 0$ and $b\geq 0$, it follows that $1+ab\leq \sqrt{(1+a^2)(1+b^2)}$ and the equality holds if and only if $a=b$.
\end{lemma}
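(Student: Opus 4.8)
The plan is to prove Lemma~\ref{appdx:lemma2} directly, then show how it delivers $R^*_{\rm UL-adp}=R^*_{\rm Static}$ (the remaining claim of Proposition~1). For the lemma itself, since both sides are nonnegative, it suffices to compare squares: we want $(1+ab)^2 \le (1+a^2)(1+b^2)$. Expanding both sides gives $1 + 2ab + a^2b^2$ on the left and $1 + a^2 + b^2 + a^2b^2$ on the right, so the inequality reduces to $2ab \le a^2 + b^2$, i.e. $0 \le (a-b)^2$, which is obviously true. Equality holds precisely when $(a-b)^2 = 0$, i.e. $a=b$. That is the whole proof of the lemma; it is elementary and there is no real obstacle here.

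The substantive part is using the lemma to finish Proposition~1. The idea is to bound \eqref{Appdx:P2obj} from above by a sum that decouples $\vvv_0$ and $\vvv_1$. Set $a_k = \sqrt{\alpha_k}\,|h^H_{d,k} + \q^H_k \vvv_0|$ and $b_k = \sqrt{\alpha_k}\,|h^H_{d,k} + \q^H_k \vvv_1|$ for a suitable nonnegative scaling constant $\alpha_k$ (absorbing $\eta_k P_{\rm A}\tau_0/(\sigma^2\tau_k)$ appropriately, or rather keeping it so that $1 + a_k b_k$ matches the argument of the logarithm). Then by Lemma~\ref{appdx:lemma2}, $1 + a_k b_k \le \sqrt{(1+a_k^2)(1+b_k^2)}$, so each term $\tau_k\log_2(1+a_kb_k) \le \tfrac{1}{2}\tau_k\log_2(1+a_k^2) + \tfrac{1}{2}\tau_k\log_2(1+b_k^2)$. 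Summing over $k$, the objective of (P2) is upper bounded by $\tfrac12\big[\,\text{(objective with }\vvv_1\text{ replaced by }\vvv_0)\,\big] + \tfrac12\big[\,\text{(objective with }\vvv_0\text{'s WPT role also played by }\vvv_1)\,\big]$; but one must be careful that $\vvv_0$ also appears in the energy-harvesting factor, so the second bracket is not literally a feasible (P3)-type objective. The clean way is: the first bracket is exactly a feasible static configuration (use $\vvv_0$ for both DL and UL), hence $\le R^*_{\rm Static}$; for the second bracket, observe that by relabeling — using $\vvv_1$ for DL WPT as well — it too is a feasible static configuration with the same time/power allocation, hence also $\le R^*_{\rm Static}$. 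Therefore the (P2) objective is $\le \tfrac12 R^*_{\rm Static} + \tfrac12 R^*_{\rm Static} = R^*_{\rm Static}$, giving $R^*_{\rm UL-adp}\le R^*_{\rm Static}$. The reverse inequality $R^*_{\rm UL-adp}\ge R^*_{\rm Static}$ is immediate since static beamforming is the special case $\vvv_1=\vvv_0$ of (P2), and combining yields equality.

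The main obstacle I anticipate is the bookkeeping in the decoupling step: making sure that after applying the lemma termwise, each of the two resulting sums can genuinely be interpreted as the objective of a \emph{feasible} instance of (P3) with the \emph{same} $\tau_0, \{\tau_k\}, \{p_k\}$, so that each is bounded by $R^*_{\rm Static}$. This hinges on the symmetry between $\vvv_0$ and $\vvv_1$ once we are willing to reassign which vector does DL WPT, and on the fact that the energy-causality constraint \eqref{P1:EH} was already tight (as noted just before \eqref{Appdx:P2obj}), so substituting it in is lossless. I would also state explicitly that the scaling constants $\alpha_k$ must be chosen consistently on the two factors so that $a_k b_k$ reproduces the SNR term in \eqref{Appdx:P2obj} and $a_k^2$, $b_k^2$ reproduce the SNR terms of the corresponding static objectives — a routine but necessary check.
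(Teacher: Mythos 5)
Your proof of the lemma itself---squaring both sides (valid since both are nonnegative) and reducing to $(a-b)^2\geq 0$, with equality iff $a=b$---is correct and is exactly the ``simple algebraic operations'' the paper alludes to without spelling out. The additional material on deriving $R^*_{\rm UL-adp}=R^*_{\rm Static}$ goes beyond the stated lemma, but for the lemma proper your argument matches the paper's intended proof.
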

Let $f(\vvv)\triangleq |h^H_{d,k} + \q^H_k \vvv|^2/\sqrt{ \eta_kP_{\rm A}\tau_0/(\sigma^2\tau_k)}$ and denote by $\vvv^*$ the vector maximizing $f(\vvv)$ subject to constraints $|[\vvv]_n|=1, \forall n$.
Then, we can establish  the following inequalities for the objective function in \eqref{Appdx:P2obj}
\begin{align}\label{eq35:inequality}
\!\!\!\!\!\!\!\!  \sum_{k=1}^{K} \tau_k\log_2\left(1+  {f(\vvv_0)f(\vvv_1)   }   \right) & \overset{(a)}{\leq}  \sum_{k=1}^{K} \tau_k\log_2\left(\sqrt{(1+{f^2(\vvv_0) }    ) (1+  f^2(\vvv_1)   ) } \right)    \nonumber \\
\!\!\!\! &=  \! \sum_{k=1}^{K} \tau_k\log_2\left(\sqrt{ 1 \! +\!  {f^2(\vvv_0) }      }  \right)  \! + \!    \sum_{k=1}^{K} \tau_k\log_2\left(\sqrt{   1 \! +\! {f^2(\vvv_1)   }   }  \right)    \nonumber  \\
& \overset{(b)}{\leq} \sum_{k=1}^{K} \tau_k\log_2\left({ 1+{f^2(\vvv^*) }      }  \right),
\end{align}
where $(a)$ is based on Lemma \ref{appdx:lemma2} and the equality holds when $\vvv_0=\vvv_1$,  and $(b)$ holds due to the optimality of $\vvv^*$ in maximizing $f(\vvv)$ and  the equality holds when $\vvv_0=\vvv_1=\vvv^*$. Based on \eqref{eq35:inequality}, we have  $\vvv_0=\vvv_1$ holds in the optimal solution to (P2), which means that (P2) is simplified to (P3) with  $R^*_{\rm UL-adp}=R^*_{\rm Static}$. This thus completes the proof.

\section*{Appendix B:  Proof of Proposition 2}
 We show this proof by contradiction. Suppose that  $S^* =\Big\{   {\tau^*_{0},\{t^*_{k,j}\},\{p^*_{k,j}\}, \vvv^*_0,\{\vvv^*_{j}\} }  \Big\}$ achieves the optimal solution to (P4) and there exists a device $m$ who performs its UL WIT employing two IRS phase-shift vectors indexed by $m'$ and $\ell$, $\ell\neq m'$, i.e.,  $t^*_{m, m'}>0$, $p^*_{m,m'}>0$ and $t^*_{m, \ell}>0$, $p^*_{m, \ell}>0$, with $ \psi_{m,m'}\triangleq  |h^H_{d,m} + \q^H_m \vvv^*_{m'}|^2 >  \psi_{m,\ell} \triangleq  |h^H_{d,m} + \q^H_m \vvv^*_{\ell}|^2$.
 Then, we construct a different solution ${S}^{\star} = \Big\{   {\tau^{\star}_{0},\{t^{\star}_{k,j}\},\{p^{\star}_{k,j}\}, \vvv^{\star}_0,\{\vvv^{\star}_{j}\} }   \Big\}$ where $ \tau^{\star}_{0}= \tau^{*}_{0}$, $  \vvv^{\star}_0 =  \vvv^{*}_0 $,    $ \vvv^{\star}_{j} = \vvv^{*}_{j} $, and

 \begin{equation}\label{eq5}
{t}^{\star}_{k,j} =
\left\{
\begin{aligned}
&t^*_{m, m'} + t^*_{m,\ell}, && k=m, j = m',\\
&0,&& k=m, j\neq m',\\
&t^*_{k,j}, &&k\neq m, 0\leq j\le J,
\end{aligned}
\right.
\end{equation}
 \begin{equation}\label{eq56}
{p}^{\star}_{k,j} =
\left\{
\begin{aligned}
&\frac{p^*_{m, m'}t^*_{m, m'}  + p^*_{m,\ell} t^*_{m,\ell}}{t^*_{m, m'} + t^*_{m,\ell}},&& k=m, j = m',\\
&0,&& k=m, j\neq m',\\
&p^*_{k,j}, &&   k\neq m, 0\leq j\le J.
\end{aligned}
\right.
\end{equation}
It can be verified that the newly constructed solution ${S}^{\star}$ is also a feasible solution to (P4) as it satisfies all the constraints therein. Since the time and transmit power solutions in UL WIT for any device $k\neq m$ remain unchanged in \eqref{eq5} and \eqref{eq56}, the UL throughput of device $m$ achieved by $S^{\star}$ is the same as that achieved by  $S^{*}$.  Thus, we only focus on the UL throughput of device $m$ via phase-shift vector $m'$, which satisfies the following inequalities
\begin{align}
{t}^{\star}_{m,m'}\log_2\left(1+\frac{ {p}^{\star}_{m,m'}\psi_{m,m'}}{ \sigma^2}\right) & =(t^*_{m,m'}+t^*_{m,\ell})\log_2\left(1+\frac{(p^*_{m,m'}+p^*_{m,\ell})\psi_{m,m'}}{(t^*_{m,m'}+t^*_{m,\ell})  \sigma^2 }\right) \nonumber \\
\overset{(a)}{\geq}&{t}^*_{m,m'}\log_2\left(1+\frac{{p}^*_{m,m'}\psi_{m,m'}}{{t}^*_{m,m'}  \sigma^2 }\right)+{t}^*_{m,\ell}\log_2\left(1+\frac{{p}^*_{m,\ell}\psi_{m,m'}}{{t}^*_{m,\ell} \sigma^2 }\right)\nonumber\\
\overset{(b)}>&{t}^*_{m,m'}\log_2\left(1+\frac{{p}^*_{m,m'}\psi_{m,m'}}{{t}^*_{m,m'}  \sigma^2  }\right)+{t}^*_{m,\ell}\log_2\left(1+\frac{{p}^*_{m,\ell}\psi_{m,\ell}}{{t}^*_{m,\ell}  \sigma^2 }\right),
\end{align}
where  inequality $(a)$ holds due to the concavity of $x \log_2(1+\frac{y}{x})$ and strict inequality $(b)$ holds due to $\psi_{m,m'}>\psi_{m,\ell}$, $\ell\neq m'$.  This means that the constructed solution $S^{\star}$ achieves a higher sum throughput than $S^*$ which contradicts the assumption that $S^*$ is optimal. This thus completes the proof.

\bibliographystyle{IEEEtran}
\bibliography{IEEEabrv,mybib}

\end{document}